\def\eps{\epsilon}
\def\be{\begin{equation}}
\def\ee{\end{equation}}
\def\bea{\begin{align}}
\def\eea{\end{align}}
\def\bea*{\begin{align*}}
\def\eea*{\end{align*}}
\def\cref#1{Figure~\ref{#1}}
\theoremstyle{plain}
\newtheorem{theorem}{Theorem}[section]
\newtheorem{lemma}[theorem]{Lemma}
\newtheorem{proposition}[theorem]{Proposition}
\newtheorem{asm}[theorem]{Assumption}
\theoremstyle{definition} 
\newtheorem{example}{Example}
\newtheorem{remark}[theorem]{Remark}
\DeclareMathOperator{\E}{\mathbb{E}}
\DeclareMathOperator{\F}{\mathbb{F}}
\DeclareMathOperator{\Q}{\mathbb{Q}}
\DeclareMathOperator{\R}{\mathbb{R}}
\DeclareMathOperator{\calF}{\mathcal{F}}
\DeclareMathOperator{\calS}{\mathcal{S}}
\DeclareMathOperator{\calT}{\mathcal{T}}
\DeclareMathOperator{\calX}{\mathcal{X}}
\DeclareMathOperator{\cF}{\mathcal{F}}
\DeclareMathOperator{\cI}{\mathcal{I}}
\DeclareMathOperator{\cL}{\mathcal{L}}
\DeclareMathOperator{\cR}{\mathcal{R}}
\DeclareMathOperator{\cS}{\mathcal{S}}
\DeclareMathOperator{\cT}{\mathcal{T}}
\DeclareMathOperator{\cX}{\mathcal{X}}
\DeclareMathOperator{\cZ}{\mathcal{Z}}
\DeclareMathOperator{\frakB}{\mathfrak{B}}
\DeclareMathOperator{\frakF}{\mathfrak{F}}
\DeclareMathOperator{\frakN}{\mathfrak{N}}
\numberwithin{equation}{section}
\date{\vspace{-1em}\normalsize{\today}}
\title{Neural Optimal Stopping Boundary\footnote{Authors
declare that they have no conflict-of-interest.}}
\author{A.\ Max Reppen\footnote{Questrom School of Business, Boston University, Boston, MA, 02215, USA, email: {\texttt{amreppen@bu.edu}}.}
\and H. Mete Soner\footnote{Corresponding author.
Department of Operations Research and Financial
Engineering, Princeton University, Princeton, NJ, 08540, USA, email: 
{\tt soner@princeton.edu}. Research of Soner
 was partially supported by the National Science Foundation grant
 DMS 2106462.}
\and Valentin Tissot-Daguette\footnote{Department of Operations Research and Financial
Engineering, Princeton University, Princeton, NJ, 08540, USA, email: 
{\tt v.tissot-daguette@princeton.edu}}}
\date{\today}
\begin{document}
\maketitle
\vspace{5pt}

\abstract{ A  method based on deep
artificial neural networks and empirical risk minimization
is developed to calculate the boundary
separating the stopping and continuation regions
in optimal stopping.  
The algorithm parameterizes the stopping boundary as the graph of a function and introduces relaxed 
stopping rules based on fuzzy boundaries
to facilitate efficient optimization.
Several financial instruments, some in high dimensions,
are analyzed through this method,
demonstrating its effectiveness.
The existence of the stopping boundary 
is also proved under natural
structural assumptions.}
\vspace{10pt}
\smallskip\newline
\noindent\textbf{Key words:} Stopping boundary problems, American derivatives, Bermudan options,
optimal stopping, deep learning, fuzzy boundary.
\smallskip\newline
\noindent\textbf{Mathematics Subject Classification:} 91G20, 
91G60, 
68T07, 
35R35. 
\vspace{10pt}

\section{Introduction}
\label{intro}

The classical decision problem of optimal
stopping has found amazing range of applications
from  finance, statistics, marketing, phase transitions to engineering.
While over the past decades many efficient methods have been 
developed for its numerical resolution, until recently
essentially all computational approaches in high-dimensions
first approximate the maximal value.
The quantity that is equally 
important  is the associated stopping rule, and 
the state is divided into two regions depending 
on whether at a point it is optimal to stop or to continue.
Although
approximate optimal actions  can either be 
calculated directly, or are always
available from the calculated value function
in a greedy policy,
a characterization and computation
of these regions through
interpretable criterion would
provide an immediate
link between the policy and the state.

In many important
applications, 
continuation and stopping regions are separated
by graphs of functions that can be used to construct interpretable 
optimal stopping decisions, and
our goal is to directly  compute these
 \emph{stopping boundaries}  whose performances
are close to the maximal ones.
In lower dimensions, they
 can be   accurately approximated by methods
based on  nonlinear partial differential equations or dynamic programming.
However, in models with many degrees of freedom,
this numerical problem
is still challenging.

Our approach is based on a 
method proposed
by E, Jentzen \& Han \cite{HanE,HanJentzenE},
which we call
\emph{deep empirical risk minimization} (deep ERM).
In this framework,  the controller is replaced 
by a deep artificial neural  network
and the random system dynamics is simulated 
via Monte-Carlo.
Recent advances in optimization tools 
and computational capabilities enable us to 
optimize this system efficiently,
thus providing an accurate tool
potentially also applicable to other
problems with stopping boundaries, including models with regime change
or other singularities
 in financial economics,  phase transitions in continuum mechanics,
and Stefan type problems in solidification.

Deep ERM applies essentially to all problems that 
can be formulated as dynamic stochastic
optimization, maybe after problem specific modifications. 
This and similar algorithms 
\cite{BHLP,Becker1,Becker2,Becker4,BGTW,BGTWM,GMS,HPBL,RW,SST,War}
have been applied to many 
classical problems from quantitative finance,
financial economics, and  to nonlinear partial
differential equations that  have stochastic
representations. 
Also, the application to optimal stopping
is carried out by Becker, Jentzen \& Cheridito \cite{Becker1,Becker2}
to compute an approximation
of the optimal stopping rule and
hence, the stopping region as well as the maximal value.
They
accurately price many American options
of practical importance in very 
high dimensions with confidence
intervals, showcasing
the power and the flexibility of deep ERM 
in optimal stopping as well.  
We refer the readers to our accompanying papers \cite{RS, RST}
for an
introduction of deep ERM, and to
the  excellent surveys \cite{FMW,HPW,RWs}
and the references therein for more information.

The stopping regions are exactly the
sets of all points at which the value and the pay-off 
functions agree, and they can be derived from
the calculated value or 
computed directly.  However, 
 qualitatively these approximations  may not 
 always have 
accurate  geometric properties,
and lack interpretability.
Thus a direct computation 
of these regions
with their known structures
is desirable,
and it is the main goal
of our approach.  
Indeed, in many 
financial applications they are characterized 
as the epi or hypo-graph of certain functions  in a natural
but problem specific 
coordinate system,
described 
in Assumption \ref{asm:star}.
We first outline
the algorithm and its properties under this assumption,
and  then verify it in Theorem \ref{t.fb}
for all examples that we consider.
In contrast to methods mapping each state point to stopping decisions, our representation of stopping boundaries as graphs provides topological guarantees, even when approximated.

In our method, the graph separating the
continuation and stopping sets is 
approximated by a deep artificial neural network
which then provides a stopping rule 
used to calculate an empirical reward function.
However,  it is not possible to directly use any gradient
method for  optimization
over these ``stop-or-go'' decisions.
Therefore,  although at the high-level our approach 
is deep ERM, it requires one important modification
to overcome this difficulty.  We replace these sharp  rules
based on the hitting times, by stopping probabilities.
In this relaxed formulation, 
when the state is not
close to the boundary,
we still stop or continue the process
as in the non-relaxed case.
But when it  is close to the boundary,
in a region called the \emph{fuzzy} (\emph{stopping}) \emph{boundary}, we stop
with a probability proportional to the distance to the 
boundary.  
This is analogous to mushy regions in solidification 
or phase fields models in continuum mechanics
 \cite{BSS,B-mushy,Mort,osher,Soner,V-mushy}.  
Similarly  \cite{Becker1,Becker2} 
uses stopping probabilities as relaxed control variables
but without any connection to the geometric structures.

We numerically study several 
American and Bermudan options
with this methodology.
The numerical results for the American 
put options in Black \& Scholes, and in Heston models
verify the effectiveness of the method.
The Bermudan  max-call option, studied in high dimensions,
shows once again the power of  deep ERM.  The numerical
results for the look-back options 
is an example of the flexibility of
this approach in handling path-dependent options.
When possible all our computations 
is benchmarked to previous studies,
and shown to be comparable.  Also, confidence intervals and
upper bounds 
 computed in \cite{Becker1,Becker3,Becker2}, provide
computable guarantees.
Additionally, the stopping regions, 
Figures \ref{t.2}, \ref{t.3}, \ref{fig:asymCall},
for the two-dimensional
max-call options
are qualitatively similar
to those obtained in \cite{BD}.

Direct computation of the stopping boundary
has been the object of several other studies as well.
In low dimensions, techniques from
nonlinear partial differential equations
modeling obstacle problems and phase transitions
allow for accurate calculations.
Also, Garcia \cite{Garcia} proposes a method
similar to ours based on a parsimonious parametrization
of the stopping region.
A recent study by 
Ciocan \& Mi\u{s}i\'{c} \cite{CM} proposes
a tree-based method to compute this partition.
Alternatively, deep Galerkin method for the differential
equations is used by
\cite{SS} to compute the value 
function of basket options
at every point and the stopping boundary
is derived from this function.
\cite{WP} studies the classical
Stephan problem of melting ice in low dimensions,
by the same approach.  We also refer the readers
to \cite{BTW} for a review 
of numerical methods based on boundary parametrization
as well as other approaches to optimal stopping.

For American option pricing with many degrees of freedom,
the main alternative computational tool to deep ERM
are the Monte-Carlo based regression methods described in
Glasserman \cite{Gbook}, 
Longstaff \& Schwartz \cite{LS} and Tsitsiklis \& van Roy 
\cite{TvR}.  
The classical book of Detemple \cite{Dbook}
and the recent article of Ludkovski \cite{Lu} 
provide extensive information on simulation based computational techniques
for optimal stopping.
The key difference between these approaches, and the
value computation through deep ERM
is the choice of the set of basis functions that grows rapidly 
with the underlying dimension. 
Deep artificial neural networks
do not require an  \emph{a priori} specification of the basis.
Therefore, they are likely to be more effective 
for problems with many states, as supported
by the experiments. Indeed, \cite{Becker3,Kohler}
exploit this property and replace the 
preset hypothesis class
by an artificial neural network in their
regression.

The power of these new approaches 
are more apparent in high dimensional settings,
and an interesting example
is the American options  with rough volatility models.
These are infinite-dimensional models,
and their numerical analysis
is given in  Bayer \emph{et al.}~\cite{BTW,BQY} and
in Chevalier \emph{at.~al.}~\cite{CPZ}
by alternative methods.
In a different class of problems with many states,
\cite{GMPW} exploits the symmetry of the
underlying problem to build 
an appropriate architecture of the neural networks.
This approach is particularly relevant for mean-field
games which models many agents that are identical.
Additionally, \cite{Becker1,Becker3,Becker2} consider
several models with many variables
including the fractional Brownian
motion.

The paper is organized as follows.  The problem and 
the value function is defined in the next section.
 The algorithm is introduced in Section \ref{s.algo},
 and the financial structure is outlined in Section \ref{s.f}.
 Section \ref{s.architecture} details the network
 architecture and parameters used in our experiments.
 One dimensional put options in the Black \& Scholes model
 is studied in Section \ref{s.putBS}, 
 in the  Heston model 
in Section \ref{s.putH}, Bermudan 
 max-call options are the topic
 of Section \ref{s.maxcall}, 
and the look-back options are the foci of Section \ref{s.look-back}.
Appendix \ref{a.existence} proves Theorem \ref{t.fb},
Appendix \ref{a.re} proves the convergence 
of the reward function $\cR_\eps$
as the width of the mushy region tends to zero.
Appendix \ref{a.algo} provides the details of the algorithm.

\section{Optimal Stopping}
\label{s.stopping}

Let $T>0$ be the finite time horizon, and
$(\Omega, \Q)$ be the probability space
with the filtration $ \F=(\cF_t)_{t \in [0,T]}$.
The state process $X = (X_t)_{t\in [0,T]}$
is an $\F$-adapted, continuous, Markov process
taking values in a Euclidean space $\cX$.
For  a fixed subset $\cT \subset [0,T]$,
$\vartheta = \vartheta(\cT)$ is the set of 
$\F$-stopping times taking values in $\calT$.
Then, the \emph{optimal stopping} problem,
corresponding to a given \emph{reward function} 
$\varphi$,
is
\begin{equation}
\label{eq:OS}
\text{to maximize}
\ \ 
v(\tau):=  \E[ \varphi(\tau,X_{\tau})],
\qquad
\text{over all}\  \tau \, \in \, \vartheta,
\end{equation}
where for some $a>0$, $\varphi(t,\cdot) \in \cL_a$ defined as
\be
\label{e.lca}
\cL_{a}:= \left\{ \phi :\cX \to \R\ :\ \text{continuous and}\
|\phi(x) | \le C[1+|x|^a]\ \
\text{for some}\ C>0 \right\}.
\ee

In financial examples, $a$ is almost
always equal to one.  Also, $\cT=[0,T]$
corresponds to an American option,
while $\cT$ is a finite set for Bermudan 
ones.  However, since for numerical calculations
one has to discretize the time variable, from the onset
we assume that 
$\cT=\{t_0, t_1, \ldots,t_{n-1}, T\}$,
with $n$ being large for the American ones.

We next define  the value function 
which is a central tool in Markov optimal control  \cite{FS}.
For $t \in \cT$ and $x \in \cX$,
let $\vartheta_t$ be the set
of all $[t,T]\cap \cT$-valued stopping times,
and set
$$
v(t,x):= \sup_{\tau \in \vartheta_t}\ v(t,x,\tau)
\quad \text{where}
\quad
v(t,x,\tau):=
\E\left[  \varphi(\tau,X_\tau) \mid X_t=x\right].
$$

Finally, we introduce our notation  $\R_+=(0,\infty)$,
and 
$$
\cT^\circ:= \cT \setminus\{T\}
=\{t_0, t_1, \ldots,t_{n-1}\}.
$$

\subsection{Stopping region}
\label{ss.cS}

The following subset of the state space is the 
\emph{stopping region} at time $t\in \cT$:
\[
  \cS_t := \left\{ x \in \cX \ :\ \varphi(t,x) = v(t,x) \right\}.
\]
As  $v(T,x)=\varphi(T,x)$ for every $x\in \cX$, we have $\cS_T=\cX$.

Because the stopping regions are closed (hence Borel) and
$\cS_T=\cX$, the hitting time 
\be
\label{e.taustar}
\tau^* := \min \left\{ u \in \cT\     :\
X_u \in \cS_u \right\} 
\ee
is a well-defined stopping time and belongs to $\vartheta$.
Moreover,   $\tau^*$
is optimal, or equivalently,  with $v(\tau)$  as in \eqref{eq:OS},
$$
v(\tau^*)=\sup_{\tau \in \vartheta} v(\tau).
$$

\subsection{Stopping boundary: graph representation}
\label{ss.free}

Our goal is to represent the stopping region by a boundary given by the graph of a function.
Although even in cases where this is not possible, such as for straddle options, the stopping region can be naturally represented as unions and intersections of this structure and treated similarly.

To allow the flexbility of a graph representation, we construct the graph boundaries in a possibly different coordinate system than the natural one of the state space.
Examples of this are illustrated in \cref{fig:homeomorphism}, showing possible boundaries and coordinates for a  max-call option on two symmetric assets; see Examples \ref{ex.maxcall.coordinates}, \ref{ex.maxcall}.
Theorem~\ref{t.fb} shows that for a large class of options, there is a natural coordinate choice, and the many examples in Section~\ref{s.f} show that the choice is often easy also in other cases.

\begin{figure}[t]
\caption{Stopping region of a max-call option on two symmetric assets. The upper connected component becomes an epigraph through $(\alpha,\Xi)$.}
\vspace{-2mm}
\begin{subfigure}[b]{0.49\textwidth}
    \centering
    \caption{Stopping region in $\calX$}
    \includegraphics[height=1.75in,width=2.0in]{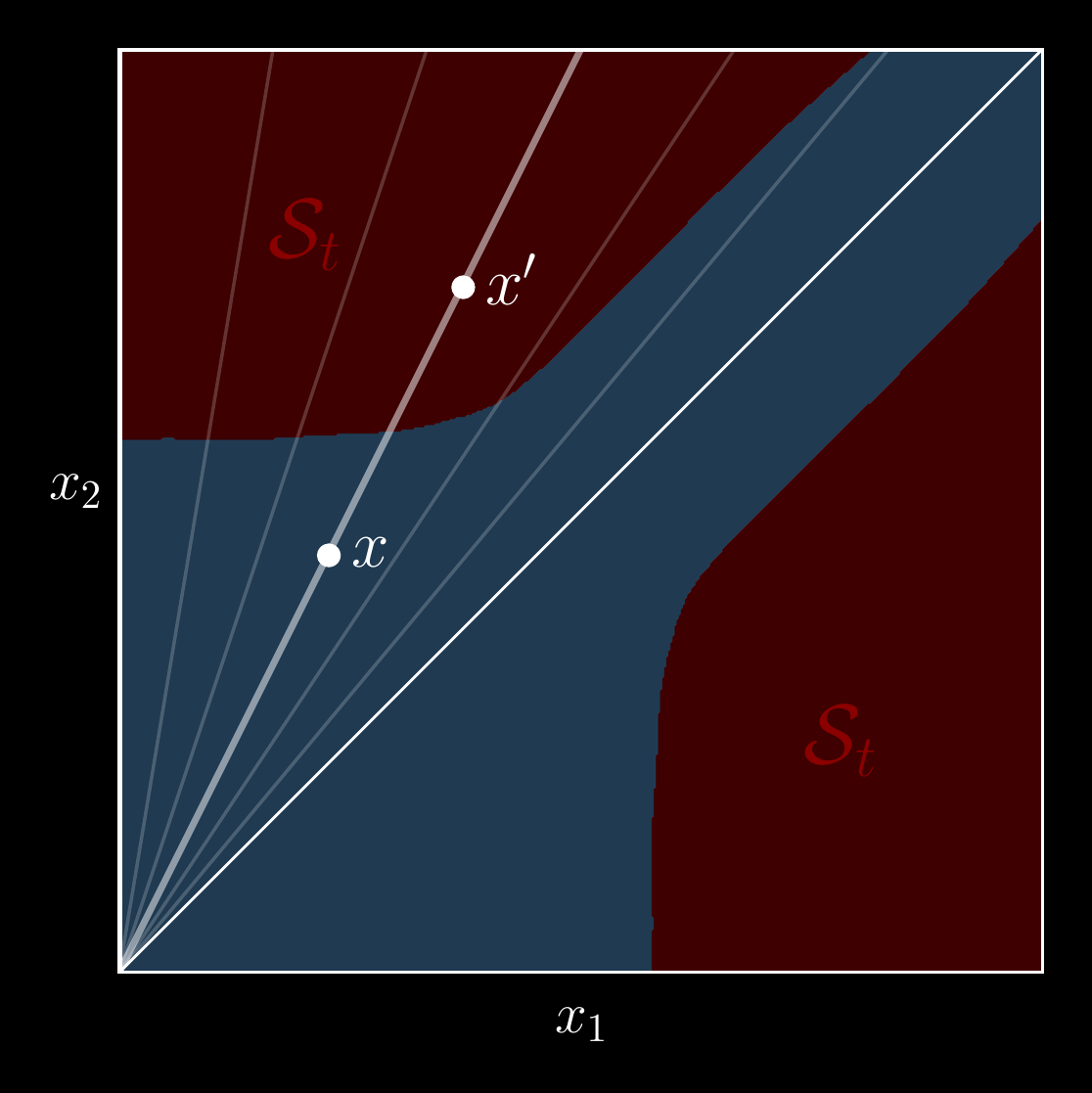}
    \label{fig:X}
\end{subfigure}
\begin{subfigure}[b]{0.49\textwidth}
    \centering
    \caption{Stopping region through $(\alpha,\Xi)$}
    \includegraphics[height=1.75in,width=2.0in]{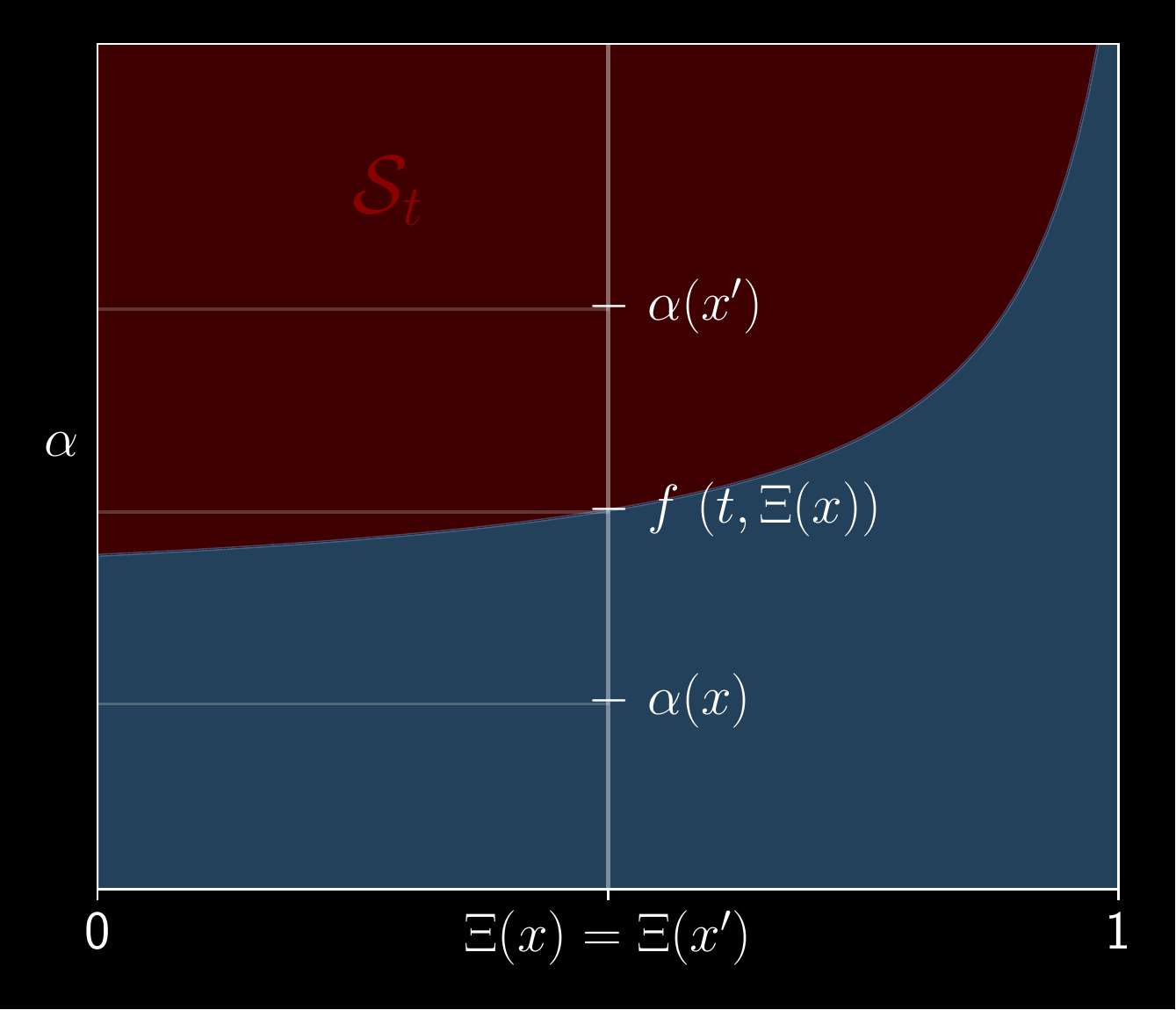}
    \label{fig:homeo}
\end{subfigure}
\label{fig:homeomorphism}
\end{figure}

Let us first consider  three simple examples:

\begin{example}[1D American put]
\label{ex.oned.coordinates}
It is well known that there exists a boundary $f^*$ as a function of time such that
\[
  \cS_t := \left\{ x \in \cX \ :\ x \leq f^*(t) \right\}.
\]
In this case, stopping is optimal in the state $(t,x)$ if $x \leq f^*(t)$, and the indifference points satisfy $x = f^*(t)$.
In other words, the stopping region is the hypograph of $f^*$ and $f^*$ is the stopping boundary.
\end{example}

\begin{example}[Straddle]
\label{ex.straddle.coordinates}
A straddle option involves the both the option of selling and buying the underlying asset. The payoff is therefore given by $\varphi(t,s) = e^{-rt}|s-K|^+$. 
 As the reward appreciates when the stock moves away from the strike, the stopping region has two boundaries: one lower for selling and one upper for buying, cf.\ Figure~\ref{fig:straddleBS}.
This stopping region is naturally the union of two stopping regions delimited by function graphs:
\[
  \cS_t = \left\{ x \in \cX \ :\ x \geq f^U(t) \text{ or } x \leq f^L(t) \right\} = \left\{ x \in \cX \ :\ x \geq f^U(t)\right\} \cup \left\{ x \in \cX \ :\ x \leq f^L(t) \right\},
\]
where $f^L, f^U: \calT^{\circ}\to [0,\infty]$ are the lower and upper boundaries, respectively.
Although this does not directly fit our main assumption, which assumes a single boundary, this more general case is a natural generalization, cf.\ Remark~\ref{rem:union_intersection}.
\end{example}

In both of the preceding examples, the boundary is described by graphs in the problem's natural coordinates.
Even if this is not the case, it is often possible to construct a coordinate system in which the graph representation holds.

\begin{example}[Max-call option]
\label{ex.maxcall.coordinates}
In the financial problem of an American 
max-call option with two stocks
and a given strike $K$, we have
$\varphi(t,x)= (\max\{ x_1,x_2\} - K)^+$
 for $x=(x_1,x_2) \in \cX$.  
With Markovian dynamics, the state space is $\cX =\R_+^2$.
The stopping region is illustrated in Figure~\ref{fig:homeomorphism}, and the coorindates $(\alpha, \Xi)$ from Theorem~\ref{t.fb} are $\alpha = \max\{ x_1, x_2\}$ and $\Xi = x / \max\{ x_1, x_2\}$.
\qed
\end{example}

With these examples in mind, we formulate our main assumption:
that a stopping boundary exists as a function graph (or union thereof), at least in some coordinates.
In Theorem \ref{t.fb} below, we verify this assumption for a large class of options,
under mild natural structural conditions.
More examples are provided in Section~\ref{s.f}.

Although the description of the following assumption is technical, 
its interpretation is quite natural.
In words, we assume that there exists a stopping boundary 
that can be characterized as the graph of some function 
$f^*$ in the appropriate coordinates $(\alpha, \Xi)$.
In the applications, the choice of these functions
is natural as discussed in
Example \ref{ex.maxcall.coordinates} above,
and in Section~\ref{s.f}. 

\begin{asm}{\rm{\bf{(Existence of a stopping boundary)}}}
\label{asm:star}
There exist measurable functions
$$
\alpha: \cX \to \R_+,
\quad
\Xi: \cX \to \Xi(\cX) ,
\quad
 f^* : \cT^\circ \times \ \Xi(\cX) \to [0,\infty],
 $$
 and $\eta \in \{-1,+1\}$,
 so that the map $x \in \cX \mapsto (\alpha(x), \Xi(x))
 \in \R_+ \times \ \Xi(\cX)$ is a homeomorphism
 {\rm{(}}i.e., one-to-one, onto,  continuous with a continuous inverse{\rm{)}}
 and for every $t \in \cT^\circ$,
\be
\label{e.st}
\cS_t = \{  x \in  \cX \ :\
\eta \left(f^*(t,\Xi(x))-\alpha(x)\right) \le 0  \}.
\ee
\end{asm} 

The parameter $\eta$ determines
whether the stopping region is an 
epigraph or a hypograph of $f^*$.
In the financial applications,
typically $\eta=1$ corresponds to a call
type option and the stopping region is an epigraph.
While $\eta=-1$ is related to put options.  

The surjectivity of $\alpha$ ensures
that this characterization is non-trivial,
and it additionally implies that for every $c>0$, the point corresponding
to the pair $(f^*(t,\Xi(x))+\eta c, \Xi(x))$ is in
the stopping region $\cS_t$, and respectively, to
the pair $(f^*(t,\Xi(x))-\eta c, \Xi(x))$ is not in $\cS_t$.  Therefore,
the graph of $f^*$ in the coordinates
$(\alpha(x),\Xi(x))$ given by $$\{x\in \cX : \alpha(x)=f^*(t,\Xi(x))\},$$
is included in the boundary of $\cS_t$.
In fact, in all examples 
it separates the stopping and continuation regions.  
Thus, we call $f^*$ the optimal \emph{stopping boundary}.

\begin{example}[Example~\ref{ex.maxcall.coordinates} continued]
\label{ex.maxcall}
In Theorem \ref{t.fb}, we show that
 there exists a stopping boundary $f^*$
such that the triplet
$(\alpha, \Xi, f^*)$ 
with  
$$
\alpha(x)= \max\{ x_1,x_2\}, 
\qquad \Xi(x)=\frac{x}{\alpha(x)},
\qquad \eta=1,
$$
satisfies the above assumption.
Figures \ref{t.2}, \ref{t.3}, \ref{fig:asymCall} below provide
examples of these regions. 
In our numerical approach,
the latent variable
$\Xi(x)$ is the input
to the network approximating the stopping boundary.
As $\Xi(x) = \Xi(\gamma x)$
for all $x\in \R_+^2,$  $\gamma > 0$,
every ray emanating from the origin is
mapped to a single point by the map $\Xi$,
and the threshold $f^*(t,\Xi(x))$
separates the stopping and continuation
regions on this ray.
\qed
\end{example}

\begin{remark}\label{rem:union_intersection}
  In the theoretical analysis, we assume that there is a single graph that separates the stopping and continuation region.

  More generally, like in the case of Example~\ref{ex.straddle.coordinates}, there may be multiple boundaries or other structures.
  The algorithm we present naturally extends to stopping regions that are unions or intersections of regions delimited by a single graph.
  This is achieved in the implementation by `or' and `and' operations for the respective set operations.

  We thus emphasize that the algorithm in Section~\ref{s.algo} has considerably flexibility.
  \qed
\end{remark}

\subsection{Value function structure}
To obtain some regularity of the value function
and the stopping regions,
we make the following mild assumption 
satisfied in all applications. Recall
that  $\cL_a$ is defined in \eqref{e.lca}.
\begin{asm}{\rm{\bf{(Growth and continuity)}}}
\label{a.growth}
There exists $a>0$,
such that for every $t \in \cT$,
$\varphi(t,\cdot) \in \cL_a$. 
Moreover, for every $\phi \in \cL_a$, $ t < \bar{t} \in \cT$,
$v_\phi (t,\cdot,\bar{t}) \in \cL_a$, where
$$
v_\phi (t,x,\bar{t}):=
\E[\phi(X_{\bar{t}}) \mid X_{t}=x],
\qquad x \in \cX.
$$ 
\end{asm}
\noindent
The above assumed continuity essentially follows form 
the continuous dependence of the distribution
of the state 
on its initial value, and it is always satisfied
in all our applications.

The following are direct consequences.

\begin{proposition}
\label{p.vlsc} Under the  growth and continuity
Assumption \ref{a.growth},
for each $ t\in \cT$,
the value function $v(t,\cdot) \in \cL_a$,
and the stopping region $\cS_t$ is a relatively closed subset of $\cX$.
\end{proposition}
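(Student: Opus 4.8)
The plan is to establish both assertions by a single backward induction over the \emph{finite} time grid $\cT=\{t_0,\dots,t_{n-1},T\}$: I would propagate membership in $\cL_a$ from each time step to the previous one using the one-step dynamic programming identity, and then read off the closedness of $\cS_t$ from the continuity half of $v(t,\cdot)\in\cL_a$.

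The base case $t=T$ is immediate, since $v(T,\cdot)=\varphi(T,\cdot)\in\cL_a$ by Assumption~\ref{a.growth} and $\cS_T=\cX$. For the inductive step, fix $t\in\cT^\circ$ with successor $t^+\in\cT$ and assume $v(t^+,\cdot)\in\cL_a$. The key input is the dynamic programming identity
\[
v(t,x)\;=\;\max\Big\{\,\varphi(t,x),\ \E\big[v(t^+,X_{t^+})\mid X_t=x\big]\,\Big\}
\;=\;\max\{\varphi(t,\cdot),\,v_{v(t^+,\cdot)}(t,\cdot,t^+)\}(x),\qquad x\in\cX.
\]
Granting this, the right-hand side is a pointwise maximum of two elements of $\cL_a$: indeed $\varphi(t,\cdot)\in\cL_a$ by Assumption~\ref{a.growth}, and $v_{v(t^+,\cdot)}(t,\cdot,t^+)\in\cL_a$ by the second part of Assumption~\ref{a.growth} applied to $\phi=v(t^+,\cdot)\in\cL_a$. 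Since $\cL_a$ is stable under pointwise maxima — continuity is preserved, and $|\max\{\phi_1,\phi_2\}|\le|\phi_1|+|\phi_2|$ gives the polynomial bound — we get $v(t,\cdot)\in\cL_a$, closing the induction. (As a check, the \emph{growth} bound alone can be obtained directly, without the dynamic programming identity, by dominating $\varphi(\tau,X_\tau)\le C\sum_{s\in\cT,\,s\ge t}(1+|X_s|^a)$ and taking conditional expectations term by term via Assumption~\ref{a.growth}; it is \emph{continuity} of $v(t,\cdot)$ that seems to genuinely require the backward recursion, since for a fixed path-dependent $\tau$ the map $x\mapsto\E[\varphi(\tau,X_\tau)\mid X_t=x]$ need not be continuous.)

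It then remains only to identify $\cS_t$. For $t\in\cT^\circ$, the function $g_t:=v(t,\cdot)-\varphi(t,\cdot)$ is continuous on $\cX$ as a difference of continuous functions, and nonnegative (take $\tau\equiv t$ in the definition of $v$), so $\cS_t=\{x\in\cX:\varphi(t,x)=v(t,x)\}=g_t^{-1}(\{0\})$ is the preimage of a closed set under a continuous map, hence relatively closed in $\cX$; for $t=T$ this is trivial.

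The step I expect to be the main obstacle — in fact the only point that is not a formality — is the one-step dynamic programming identity. I would obtain the inequality ``$\le$'' by decomposing an arbitrary $\tau\in\vartheta_t$ on the $\cF_t$-measurable event $\{\tau=t\}$ and applying the Markov property together with the tower property on its complement, and the inequality ``$\ge$'' by combining $\tau\equiv t$ with pasting, at time $t^+$, an $\eps$-optimal stopping time for $v(t^+,\cdot)$ onto the post-$t^+$ trajectory. Because $\cT$ is finite this argument needs only a single measurable selection and no limiting procedure over stopping times, so I would simply invoke it from \cite{FS}.
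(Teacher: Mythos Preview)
Your proposal is correct and follows essentially the same route as the paper: backward induction along the finite grid $\cT$ using the one-step dynamic programming identity $v(t_k,x)=\max\{\varphi(t_k,x),\E[v(t_{k+1},X_{t_{k+1}})\mid X_{t_k}=x]\}$, then invoking closure of $\cL_a$ under pointwise maxima, and finally reading off closedness of $\cS_t$ as the zero set of the continuous function $v(t,\cdot)-\varphi(t,\cdot)$. Your additional remarks on why continuity genuinely requires the recursion and your sketch of the DP identity go slightly beyond what the paper records, but the argument is the same.
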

\begin{proof}
To prove the first statement,
we use backward induction in time
$t_0<t_1<\ldots<t_n=T$. 
First observe that by Assumption \ref{a.growth}, 
$v(T,\cdot)=\varphi(T,\cdot) \in \cL_a$.
For the induction hypothesis,
suppose that $v(t_{k+1},\cdot) \in \cL_a$
for some $t_{k+1}\in \cT$. 
By dynamic 
programming,
$$
v(t_k,x)= \max\left\{ \varphi(t_k,x),\ c(t_k,x) \right\},
\qquad\text{where}
\qquad
c(t_k,x):= \E[v(t_{k+1},X_{t_{k+1}}) \mid X_{t}=x].
$$
As $v(t_{k+1},\cdot)$ is assumed to be in $\cL_{a}$, 
 Assumption \ref{a.growth}
implies that
$c(t_k,x)$
is also in $\cL_a$.
Since the space $\cL_a$ is closed
under maximization, the above equation
implies that $v(t_k,\cdot) \in  \cL_a$.
Hence, $v(t,\cdot) \in \cL_a$ for every $t\in \cT$, and in particular,
it is continuous.   

The stopping region $\cS_t$ is the zero level set 
of  $v(t,\cdot)-\varphi(t,\cdot)$ which is shown to be continuous.
Hence, it is  relatively closed in $\cX$.
\end{proof}

As an immediate consequence of the above assumption,
we can restrict the maximization in \eqref{eq:OS} to 
stopping times given by stopping boundaries.  For future
reference, we record this fact.  Let $\frakF$ be the set of
all measurable functions $f :\cT^\circ \times \ \Xi(\cX) \to [0,\infty]$.
For $f\in \frakF$, the corresponding
stopping time is given by,
$$
\tau_f:=  \min \left\{ u \in \cT\ :\
\eta\left( f(u,\Xi(X_u))-\alpha(X_u)\right) \le 0 \ \ \text{or}\ \ u=T\right\}
\in \vartheta.
$$

\begin{lemma}
\label{l.restrict} Under the growth and continuity Assumption \ref{a.growth},
and the existence of a stopping boundary Assumption
\ref{asm:star}, 
$$
\sup_{\tau \in \vartheta} v(\tau) = \sup_{f \in \frakF} v(\tau_f).
$$
\end{lemma}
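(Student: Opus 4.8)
The plan is to establish the two inequalities separately, the nontrivial direction being ``$\ge$'', and to obtain it simply by exhibiting the known optimal stopping time $\tau^*$ in the form $\tau_{f^*}$ for an admissible $f^*\in\frakF$.

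First, the inequality $\sup_{f\in\frakF} v(\tau_f) \le \sup_{\tau\in\vartheta} v(\tau)$ is immediate: as recorded right after the definition of $\tau_f$, for every $f\in\frakF$ the rule $\tau_f$ is an $\F$-stopping time taking values in $\cT$, hence $\tau_f\in\vartheta$, so each $v(\tau_f)$ is one of the quantities over which the right-hand side takes a supremum.

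For the reverse inequality I would invoke the facts already established in Section~\ref{ss.cS}: under Assumption~\ref{a.growth}, Proposition~\ref{p.vlsc} gives that each $\cS_t$ is relatively closed, so the hitting time $\tau^* = \min\{u\in\cT : X_u\in\cS_u\}$ is a well-defined element of $\vartheta$ and is optimal, i.e.\ $v(\tau^*) = \sup_{\tau\in\vartheta} v(\tau)$. Now use Assumption~\ref{asm:star}: the function $f^*:\cT^\circ\times\Xi(\cX)\to[0,\infty]$ is by hypothesis measurable with the right domain and codomain, so $f^*\in\frakF$. Moreover, by the characterization \eqref{e.st}, for every $u\in\cT^\circ$ and every $\omega$ one has $X_u\in\cS_u \iff \eta\big(f^*(u,\Xi(X_u))-\alpha(X_u)\big)\le 0$, which is exactly the stopping condition appearing in the definition of $\tau_{f^*}$; and since $\cS_T=\cX$, the auxiliary clause ``$u=T$'' in the definition of $\tau_{f^*}$ is never the binding one before $X$ has already entered some $\cS_u$, so it does not change the value. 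Hence $\tau_{f^*} = \tau^*$ pathwise, and therefore
\[
\sup_{f\in\frakF} v(\tau_f) \ \ge\ v(\tau_{f^*}) \ =\ v(\tau^*) \ =\ \sup_{\tau\in\vartheta} v(\tau),
\]
which together with the first inequality yields the claimed equality.

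There is no real analytic obstacle here: all the substance is already packaged in the optimality of $\tau^*$ (Section~\ref{ss.cS}) and in the homeomorphism/graph structure of Assumption~\ref{asm:star}. The only points requiring a line of care are the verification that $f^*$ genuinely lies in $\frakF$ (measurability and $[0,\infty]$-valuedness, both granted by Assumption~\ref{asm:star}) and the bookkeeping check that the terminal-time convention in $\tau_{f^*}$ is consistent with $\cS_T=\cX$, so that $\tau_{f^*}$ and $\tau^*$ coincide as stopping times.
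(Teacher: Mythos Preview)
Your proposal is correct and follows essentially the same approach as the paper's own proof: the trivial inequality comes from $\tau_f\in\vartheta$, and the reverse comes from identifying $\tau_{f^*}=\tau^*$ via the characterization \eqref{e.st} together with $\cS_T=\cX$, then using the optimality of $\tau^*$. The paper's argument is more terse but structurally identical; your explicit checks that $f^*\in\frakF$ and that the terminal-time clause is consistent are harmless elaborations.
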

\begin{proof}
As $\tau_f \in \vartheta$,
we have $\sup_{\tau \in \vartheta} v(\tau) \ge
v(\tau_f)$  for all  $f \in \frakF$.
Let $f^*$ be as in Assumption \ref{asm:star}.
Then, for $u \in \cT$, we have $X_u \in \cS_u$ if and only if,
either $\eta\left(f^*(u,\Xi(X_u))-\alpha(X_u) \right) \le 0$, or $u=T$.
Therefore, $\tau_{f^*}$ is equal to the 
optimal hitting time $\tau^*$ defined in \eqref{e.taustar}.
These imply that
$\sup_{\tau \in \vartheta} v(\tau) \ge \sup_{f \in \frakF} v(\tau_f)
\ge v(\tau_{f^*}) = v(\tau^*)
= \sup_{\tau \in \vartheta} v(\tau)$.
\end{proof}

\section{The Algorithm}
\label{s.algo}
Our goal is to construct an efficient
algorithm for the calculation
of the interpretable
stopping decisions based on boundaries
approximated by artificial neural networks.  
We train the networks by the \emph{deep
empirical risk minimization} algorithm proposed 
by Weinan E, Jiequn Han, and Arnulf Jentzen 
\cite{HanE,HanJentzenE}.  

This method 
approximates the expected value of a reward 
function by Monte-Carlo simulations and
optimizes it by stochastic gradient ascent.
For the problem of optimal stopping,
the natural choice for the reward function  is
$\varphi(\tau_{\theta},X_{\tau_{\theta}})$. However, the map $\theta \mapsto \tau_\theta$
is piece-wise constant, making
optimization by gradient ascent impossible.
Therefore, we replace the hitting times
by stopping probabilities based on 
fuzzy boundaries that are also used in  some
problems of solidification  \cite{B-mushy,Mort,Soner,V-mushy}. 

\subsection{Fuzzy Stopping Boundary}
\label{ss.fuzzy}
Let $\eta\in \{-1,1\}$ be as in Assumption \ref{asm:star}.
For  $f \in \frakF$, set
$$
d(t,x;f):= \eta \left(f(t,\Xi(x))-\alpha(x)\right), \qquad
t \in \cT^\circ, \ x \in \cX.
$$
In the original problem, once a boundary $f$ is chosen,
we stop
the process only when $d(t,X_t;f)\le0$.
As training is not possible
with this sharp ``stop-or-continue'' rule,
we introduce the fuzzy boundaries.  Namely, we fix a 
\emph{tuning parameter} $\eps>0$, and 
at time $t \in \cT^\circ$,
define 
$$
\frakB_t^\eps(f) :=\{ x\in \cX \ :\ |d(t,x;f)|<\eps\}
$$ 
be the 
\emph{fuzzy region}.
\begin{figure}[H]
\begin{center}
\includegraphics[height=1.8in,width=2.8in]{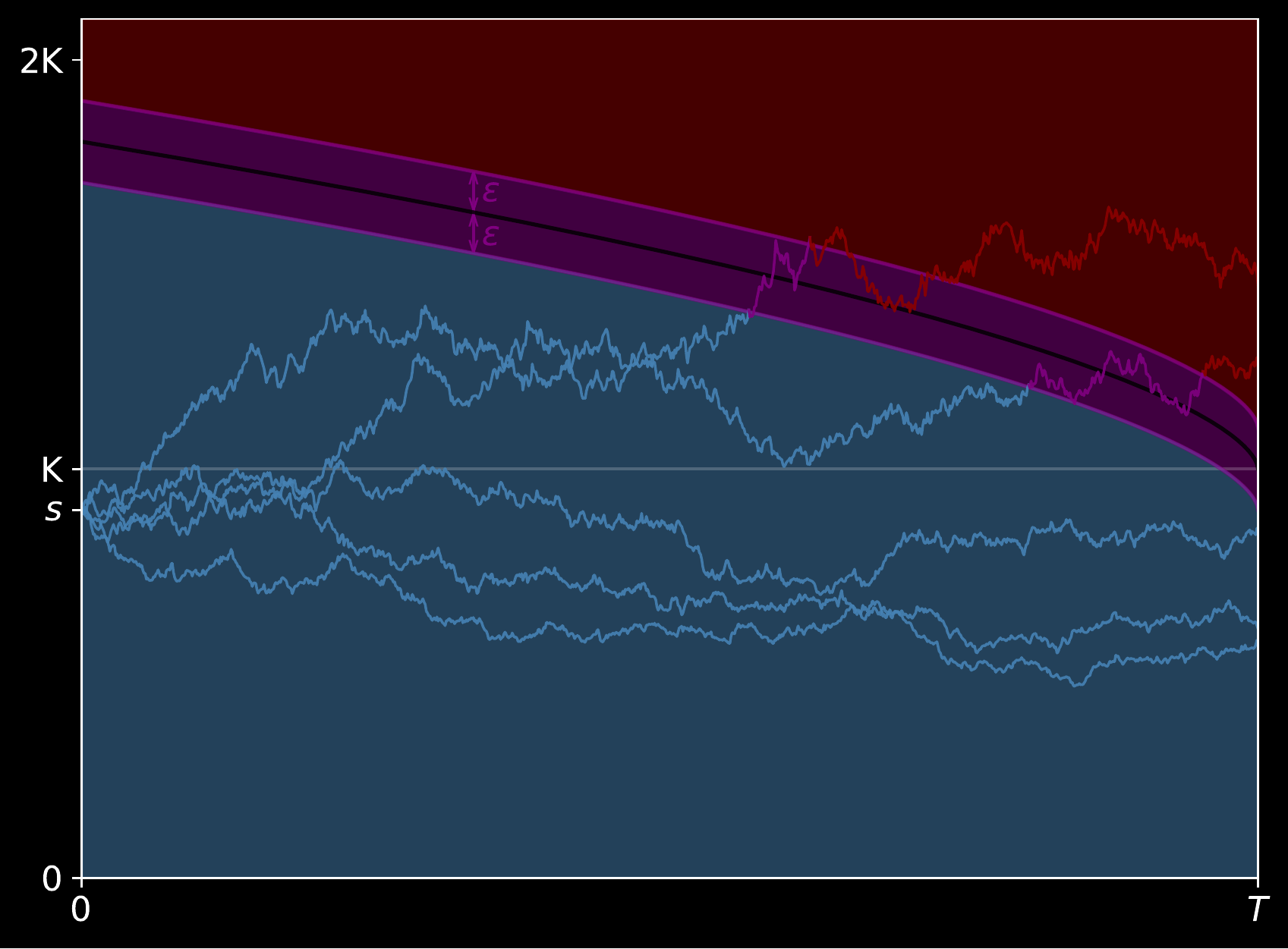}
\caption{
Fuzzy region of an American call (in purple) 
connecting the stopping region (in red) 
to the continuation region (in blue). }

\label{fig:fuzzy}
\end{center}
\end{figure}
\vspace{-0.5em}
\noindent
An illustration of the fuzzy region is given in \cref{fig:fuzzy} for an  
American call option.  If the state is not in
$\frakB_t^\eps(f)$,
we continue or stop, as before, with probability one.  
But if we are in the fuzzy region,
we stop with  probability $p_t(X_t,f)$, where
$$
p_t(x,f):= \left(\frac{\eps - d(t,x,f)}{2\eps}\right)^+\wedge 1,
\qquad   t \in \cT^\circ,\  f \in \frakF, \ x \in \cX.
$$
At maturity, we stop at all states, or equivalently,
we set $p_T(\cdot,f)\equiv 1$.
Precisely, $p_t(X_t,f)$
is the probability of stopping  at time $t \in \cT$,
conditioned on the event that the process is not stopped
prior to $t$.  The relaxed control problem is then defined
using the reward function given by
\begin{equation}
\label{e.ell}
\cR_\eps(X,f):=
\sum_{t\in \cT}\ p_t(X_t,f)\ b_t(X,f)\ \varphi(t,X_t),
\end{equation}
where  $b_t(X,f)$ is the probability
of not stopping strictly before time $t$
and it  is 
obtained as the solution of the following difference equations,
$$
b_{t+1}(X,f)
=b_{t}(X,f) (1-p_{t}(X_t,f)), \qquad t \in \cT^\circ,
$$
with $b_0(X,f)=1$.
In other words,
$b_t(X,f)$ is the unused 
\emph{stopping-budget} remaining.
The process $b_t(X,f)$ is non-increasing in
time and takes values in $[0,1]$.  
Connection
between $\E[\cR_\eps(X,f)]$ and 
$v(\tau_f)$ is 
discussed in Appendix \ref{a.re}.

We refer to \cite{RST} for a different exposition of this relaxation.

\subsection{Deep ERM}
\label{ss.derm}

In this method, we 
approximate the stopping boundary
by a set of deep neural networks
that we abstractly parameterize
by $\theta$ taking values in a finite-dimensional space $\Theta$
as follows:
$$
\frakN=\left\{\ g(\cdot; \theta) :\cT^\circ \times \ \Xi(\cX) \to \R_+\ :
\ \theta \in \Theta\ \right\} \subset \frakF.
$$
Let $\tau_{\theta}=\tau_{g(\cdot;\theta)}$
be the hitting time corresponding to 
the stopping boundary $g(\cdot;\theta)$.
The following is the  pseudocode of the deep
empirical risk minimization based on
the  fuzzy stopping boundary.
\vspace{10pt}

\noindent
\begin{enumerate}
\item \emph{Initialize}  $\theta \in \Theta$.
\item \emph{Simulate}
independent state trajectories,  $\{X^{(1)}, \ldots, X^{(B)}\}$,
where $B$ is the \emph{batch size}. 
\item Compute the \emph{empirical reward}  function:
$$
R_\eps(\theta):=\frac1B\ \sum_{i=1}^{B}\ \cR_\eps(X^{(i)},g(\cdot;\theta)),
$$
where $\cR_\eps(X,f)$ is as in \eqref{e.ell}. A 
modification $\cR_{\eps,\lambda}$, given in \eqref{e.lambda},
allows for \emph{importance sampling} discussed in Section
\ref{ss.important} below.
\item \emph{Optimize and update by stochastic gradient ascent}
with the  learning rate process $\zeta$:
$$
\theta \, \leftarrow \,
 \theta + \zeta  \nabla_\theta R_\eps(\theta).
 $$

\item \emph{Stop} after  $M$ \emph{number of iterations}.
\item \emph{Compute the initial price}
given by the training network, using $J$ many
Monte-Carlo simulations with a sharp boundary.
\end{enumerate}
\vspace{10pt}

A formal discussion of the algorithm is given in the next subsection.  
In  Appendix \ref{a.algo}, we also provide more details
of the above algorithm for the benefit
of readers interested in the implementation. 
\cite{Garcia} proposes a similar
approach by assuming that the stopping region 
has a finite dimensional representation
without a formal justification.
Also,
Chapter 8.2 of \cite{Gbook} discusses a general
approximation for the value calculation by parametrizing
the stopping times. In our approach 
we approximate the stopping boundary directly.

\subsection{Discussion}
\label{ss.discuss}

The general theory of stochastic gradient ascent  \cite{SGD} implies that
the above algorithm constructs an approximation of 
$$
\theta_\eps^* \in \arg\max_{\theta \in \Theta}\ 
\E\left[\cR_\eps(X,g(\cdot;\theta))\right].
$$
In the context of optimal
stopping, \cite{bel}
also provides rates of convergence.
Moreover, in Lemma \ref{l.conv} below, we
show that for each $\theta \in \Theta$,
$$
\lim_{\eps \downarrow 0}\  \E[\cR_\eps(X,g(\cdot;\theta))]=
\E[\varphi(\tau_\theta,X_{\tau_\theta})]= v(\tau_\theta).
$$
Therefore,
$$
\E\left[\cR_\eps(X,g(\cdot;\theta_\eps^*))\right]
= \sup_{\theta \in \Theta}\ 
\E\left[\cR_\eps(X,g(\cdot;\theta))\right]
\approx 
 \sup_{\theta \in \Theta}\ v(\tau_\theta).
 $$
Also, by the celebrated universal approximation theorem \cite{Cybenko,Hornik},
continuous stopping boundaries are well approximated 
by the hypothesis class $\frakN$, and in all our applications
the optimal stopping boundaries are continuous.
Thus, in view of this
approximation capability of $\frakN$ and Lemma~\ref{l.restrict},
we formally have 
$$
\E\left[\cR_\eps(X,g(\cdot;\theta_\eps^*))\right]
= \sup_{\theta \in \Theta}\ 
\E\left[\cR_\eps(X,g(\cdot;\theta))\right]
\approx \sup_{\theta \in \Theta}\ v(\tau_\theta) \approx
\sup_{f \in \frakF}\ v(\tau_f) 
=\sup_{\tau \in \vartheta}\ v(\tau). 
$$

Hence, the above algorithm
constructs an artificial neural network
with an  asymptotic performance formally close to the 
optimal value, justifying
the algorithm.
However, there are several
technical issues in front a rigorous statement
and we postpone this convergence analysis
to another manuscript \cite{SonerTissot}.

\section{Financial Examples}
\label{s.f}
We numerically study several
American or Bermudan options to illustrate
and also assess the proposed algorithm.
In these applications, $\Q$ is a risk-neutral measure and the Markov state process 
$X=(S,Y,Z)$ has three components: $S$ is
the prices of the underlying stocks
possibly including the past values to make it Markov,
a factor process $Y$ used in models like Heston,
and a functional $Z$ of the
stock process, making
the pay-off of a path-dependent option a function of
the current value of $X$.  Typical examples
of $Z$ are the running maximum or minimum in look-back options,
and  the average stock price for the Asian options.
Depending on the model and the option, $Y$, $Z$, or neither may
be present. 

In our examples,
the pay-off function
always has the following form:
\begin{equation}
\label{eq:payoff}
\varphi(t,x) 
= e^{-rt} \left( \eta  (\alpha(s,z) -\beta(s,z)-K)\right)^+, 
\qquad t \in \cT, \ x=(s,y,z) \in \cX, 
\end{equation}
where $\alpha>0, \beta \ge 0$
are  \emph{positively homogenous of degree one},
and the parameter $\eta\in\{-1,+1\}$
determines the option type:
$\eta=1$ is a call and $\eta=-1$  is a put.
The choice of $\alpha$ and $\beta$ may not be 
unique, and we would choose $\beta\equiv0$ 
when possible. It is clear that $\varphi$ 
grows linearly.  Additionally, in all our examples,
the stock price distribution depends
smoothly on the initial data.  Hence, the growth
and continuity Assumption \ref{a.growth} holds with $a=1$.

The following stopping boundary result is proven in the Appendix.
For max-call options a similar result is proved
in Proposition A.5 in \cite{BD}, and for look-back options in \cite{DaiKwok}.
\begin{theorem}
\label{t.fb} Suppose that in \eqref{eq:payoff},
 $\alpha : \calX \to \R_+$,  $\beta: \calX \to [0,\infty)$
are positively homogenous of degree one, and that
 $X$ scales 
linearly in the initial values $S_t=s, Z_t=z$, i.e., for 
$x=(s,y,z)$, $\gamma>0$,
and a bounded, continuous, integrable function $\phi$,
the following holds:
$$
\E\left[ \phi(u,X_u) \mid  X_t =(\gamma s, y, \gamma z)\right]
= \E\left[ \phi(u,\gamma X_u) \mid  X_t =(s, y,z)\right].
$$
Further assume that the growth and continuity 
Assumption \ref{a.growth} holds, and 
the price of the European option is strictly positive
for all $t<T$.  Then, the existence of a stopping boundary Assumption
\ref{asm:star} is satisfied with  $\alpha $
as in \eqref{eq:payoff}, 
\be
\label{e.xi}
\Xi(s,y,z)= \left(\frac{s}{\alpha(s,z)}\,
,\, y\, ,\, \frac{z}{\alpha(s,z)}\right),
\ee
and
\be
\label{e.fstar}
f^*(t,\xi):=
\left\{
\begin{array}{ll}
\sup\left\{ \alpha(s,z)\ :\ x\in \cS_t\ \ \text{s.t.}\ \ \Xi(x)=\xi \right\}, \quad
 & \text{if}\ \ \eta=- 1,\\
\, \inf\left\{ \alpha(s,z)\ :\ x\in \cS_t\ \ \text{s.t.}\ \ \Xi(x)=\xi \right\}, \quad
 & \text{if}\ \ \eta=1,
 \end{array}
 \right.
\ee
where we use the convention that the sup over the empty set is zero and
inf is infinity.
\end{theorem}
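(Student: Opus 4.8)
\medskip
\noindent\textbf{Proof plan.}
The idea is to use the scaling hypothesis to reduce everything to a one-dimensional statement along each \emph{fiber} $\{x\in\cX:\Xi(x)=\xi\}$. Write a point of $\Xi(\cX)$ as $\xi=(\sigma,y,\zeta)$. By positive homogeneity, $\alpha(\sigma,\zeta)=1$ for every such $\xi$, so the fiber over $\xi$ is exactly $\{x_a:a\in\R_+\}$, where $x_a:=(a\sigma,y,a\zeta)$ and $\alpha(x_a)=a$. I would first dispatch the homeomorphism claim: the maps $x\mapsto(\alpha(x),\Xi(x))$ and $(a,\xi)\mapsto x_a$ are mutually inverse, and both are continuous because $\alpha$ is continuous, strictly positive and positively homogeneous of degree one, and $\cX$ is invariant under $(s,z)\mapsto(\gamma s,\gamma z)$, $\gamma>0$ (which is what makes the scaling hypothesis meaningful). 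This part is routine bookkeeping.

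By \eqref{eq:payoff} and homogeneity, $\varphi(u,\gamma x')=e^{-ru}(\eta(\gamma(\alpha(s',z')-\beta(s',z'))-K))^+$ for $\gamma>0$; along a fiber this reads $\varphi(t,x_a)=e^{-rt}(\eta(a\mu_\xi-K))^+$ with $\mu_\xi:=\alpha(\sigma,\zeta)-\beta(\sigma,\zeta)$, and the scaling hypothesis gives $\E[\varphi(\tau,X_\tau)\mid X_t=x_{\lambda a}]=\E[\varphi(\tau,\lambda X_\tau)\mid X_t=x_a]$ for all $\lambda>0$ and $\tau\in\vartheta_t$. The heart of the proof is the claim that, for fixed $t\in\cT^\circ$ and $\xi$, the set $A_\xi:=\{a\in\R_+:x_a\in\cS_t\}$ is \emph{upward} closed when $\eta=1$ and \emph{downward} closed when $\eta=-1$. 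A preliminary observation is that $x_a\in\cS_t$ forces $\varphi(t,x_a)>0$, i.e.\ $\eta(a\mu_\xi-K)>0$: otherwise $v(t,x_a)=\varphi(t,x_a)=0$, and since $t<T$ we may take $\tau\equiv T\in\vartheta_t$ to obtain $\E[\varphi(T,X_T)\mid X_t=x_a]=0$, contradicting the assumed strict positivity of the European price.

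For the monotonicity, take $x_a\in\cS_t$ and $\lambda\ge1$ when $\eta=1$ (the case $\eta=-1$ is symmetric, with $0<\lambda\le1$). The elementary inequality $(\lambda m-K)^+\le\lambda(m-K)^++(\lambda-1)K$ for $m\in\R$, $K\ge0$ yields, pointwise, $\varphi(u,\lambda x')\le\lambda\varphi(u,x')+(\lambda-1)Ke^{-ru}$. Combining this with the scaling identity, with $x_a\in\cS_t$ (so that $\E[\varphi(\tau,X_\tau)\mid X_t=x_a]\le v(t,x_a)=\varphi(t,x_a)$), and with $\E[e^{-r\tau}\mid X_t=x_a]\le e^{-rt}$ (using $r\ge0$ and $\tau\ge t$), one obtains, for every $\tau\in\vartheta_t$,
\[
\E[\varphi(\tau,X_\tau)\mid X_t=x_{\lambda a}]\;\le\;\lambda\varphi(t,x_a)+(\lambda-1)Ke^{-rt}.
\]
Since $x_a$ is strictly in the money, $\varphi(t,x_a)=e^{-rt}(a\mu_\xi-K)$ and the right-hand side equals precisely $e^{-rt}(\lambda a\mu_\xi-K)^+=\varphi(t,x_{\lambda a})$; taking the supremum over $\tau$ gives $v(t,x_{\lambda a})\le\varphi(t,x_{\lambda a})$, hence $x_{\lambda a}\in\cS_t$. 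Thus $A_\xi$ has the asserted one-sided monotonicity, and being also relatively closed (Proposition~\ref{p.vlsc} together with continuity of $a\mapsto x_a$) it equals $[a_\xi^*,\infty)$ when $\eta=1$ and $(0,a_\xi^*]$ when $\eta=-1$, possibly empty.

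It then remains to unwind definitions: by the description of $A_\xi$, the function $f^*$ of \eqref{e.fstar} is exactly $\inf A_\xi$ when $\eta=1$ and $\sup A_\xi$ when $\eta=-1$, with the stated empty-fiber conventions, and $\{x_a:a\in A_\xi\}=\{x_a:\eta(f^*(t,\xi)-a)\le0\}$; taking the union over $\xi$ gives \eqref{e.st}. Measurability of $\alpha$ and $\Xi$ is clear, and closedness of the homeomorphic image of $\cS_t$ shows that the epigraph (resp.\ hypograph) of $f^*(t,\cdot)$ is closed, hence $f^*(t,\cdot)$ is semicontinuous and measurable; since $\cT^\circ$ is finite, $f^*$ is measurable. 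The main obstacle is the monotonicity step: one must pin down the correct pointwise payoff inequality and its admissible range of $\lambda$, use the strict positivity of the European price precisely to discard out-of-the-money stopping points (which lets the positive part in $\varphi(t,x_a)$ be removed), and verify the exact algebraic cancellation by which $\lambda\varphi(t,x_a)+(\lambda-1)Ke^{-rt}$ collapses to $\varphi(t,x_{\lambda a})$; the bound $\E[e^{-r\tau}\mid X_t=x_a]\le e^{-rt}$, i.e.\ $r\ge0$, is also used.
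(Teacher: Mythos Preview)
Your argument is correct and follows essentially the same route as the paper's proof: both establish the key monotonicity of $\cS_t$ along each fiber $\{\Xi=\xi\}$ via the pointwise payoff inequality $(\lambda m-K)^+\le\lambda(m-K)^++( \lambda-1)K$ (or its $\eta=-1$ analogue) combined with the scaling hypothesis, use strict positivity of the European price to ensure stopping points are in the money, and then read off \eqref{e.st} from the definition of $f^*$ together with closedness of $\cS_t$. The only differences are cosmetic---you treat $\eta=1$ explicitly while the paper treats $\eta=-1$, and you are more careful about the use of $r\ge0$ in bounding $\E[e^{-r\tau}]$ and about the measurability of $f^*$, points the paper leaves implicit.
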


The positivity of the European price
ensures that on stopping regions $\varphi>0$.


\subsection{Importance sampling}
\label{ss.important}

To effectively train the network  $g(\cdot;\theta)$
by deep ERM, 
it is  crucial  that the simulated paths
$X_t$ hit the fuzzy  boundary $\frakB_t^\eps(g(\cdot;\theta))$ 
 frequently 
enough for each exercise date. 
We achieve this by importance sampling
that we now outline.

In all our examples,
$S$ evolves according to a stochastic differential equation, 
\begin{align*}
 dS_t = S_t\left(\mu_t dt + \sigma_t dW_t\right),
\quad s_0 \in \R^d,
\end{align*}
where the processes $\mu \in \R^d$,
$\sigma \in \R^{d\times d}$ 
may depend on $t,S_t$ and other exogenous factors. 
For $\lambda=(\lambda_1,\dots,\lambda_d)  \in \R^d$, let
$\Q_\lambda$ be the measure obtained by
 the Girsanov transformation so that
 $W_t^{\lambda} :=  W_t +\lambda t $ is a 
 $\Q_{\lambda}$ Brownian motion.
 Hence, under $\Q_\lambda$,
$$
dS_t =  S_t\left([\mu_t -   \sigma_t \lambda ] dt 
+ \sigma_t dW^{\lambda}_t\right).
$$
We then adjust the hitting
frequency of the simulated trajectories,  
by choosing $\lambda$ appropriately,
thus allowing for more efficient training.
Although in all our experiments we have employed constant $\lambda$,
one could also use time dependent
ones as well.

However, as the probability measure is modified,
we need to account for it in the reward function.
This is achieved by the Radon-Nikodym derivative
given by,
$$
\cZ^{\lambda}_t (S):=
\frac{d\Q}{d\Q_{\lambda}} \big |_{\calF_t} 
= \exp( \lambda \cdot W_t^{\lambda}  -\frac12 
|  \lambda|^2 t).
$$
Since $W^\lambda$ can be expressed in terms of $S$, 
the process $\cZ^\lambda$ 
can be viewed as a function of the state process $S$.
We then modify the reward function given in \eqref{e.ell}
by,
\begin{equation}
\label{e.lambda}
\cR_{\eps,\lambda}(X,f):= 
\sum_{t\in \cT}\ \cZ^{\lambda}_t(S)\ p_t(X_t,f)\ b_t(X,f)\ \varphi(t,X_t).
\end{equation}

\subsection{Network Architecture and Parameters}
\label{s.architecture}
 
All our numerical  experiments have been carried out  
with Tensorflow 2.7  on a 2021 Macbook pro with 64GB unified memory 
and Apple M1 Max chip. The code is implemented in 
Python and run on CPU (10-core) only.  
Throughout the examples, we use the same neural 
network architecture and training parameters,  given  below:
\vspace{5pt}

\noindent
{\bf{1.}} We use only one deep neural network
that takes both the time and 
the state vector as input.  Then, 
the trained boundary is a function of the 
continuous time variable, delivering smoother 
dependence on time.  This is in contrast to
majority of previous studies that
use one deep neural network for each time 
in the discrete set $\cT$.
\vspace{5pt}

\noindent
{\bf{2.}} 
The single feedforward neural network that 
we employ
consists of $2$ hidden layers
 with leaky rectified linear unit (Leaky ReLU) activation function. 
 Each layer has $20+\bar{d}$ many nodes where $\bar{d}$ is the dimension of the latent space $\Xi(\calX)$.  
The output layer uses the standard rectified linear unit (ReLU)  activation. 
An important parameter
is the  \emph{bias  $\theta_0 \in \R_+$ in the output layer}
that sets  an initial boundary level.     
Namely, the output function $g$ is given by,
$$
g(t,\xi; \theta_1,\theta_0,\theta_{in}) = \left(\theta_1 \cdot g_{in}(t,\xi; \theta_{in})+\theta_0\right)^+, 
$$
where  $g_{in}(t,\xi; \theta_{in}) \in \R^{20+\bar{d}}$ 
is the value of the neural network  before entering the output layer,
with parameters $\theta_{in}$. 
In our experiments, we have observed that the algorithm 
is not sensitive to the initial bias $\theta_0$, 
as long as it is set to a reasonable value.  For example, 
for options with a given strike $K$, 
we  choose $\theta_0=\frac{3K}{2}$,  $\theta_0=\frac{K}{2}$ for  calls 
and puts,  respectively. 
\vspace{5pt}

\noindent
{\bf{3.}}
We set the number of stochastic gradient
iterations to $M=3,000$ with a
fixed simulation batch size of $B = 512$. 
\vspace{5pt}

\noindent
{\bf{4.}} The learning rate 
process $\zeta$ 
is taken from the  Adam optimizer \cite{Kingma}.
    
\vspace{5pt}

\noindent
{\bf{5.}}
The fuzzy region width $\eps$ is chosen to be the strike $K$
times the standard deviation of the  increments of $\alpha(X)$.
While the strike $K$ adjusts to the scale of the payoff,
the second term reflects the typical variation 
of the  process $\alpha(X)$ between exercise dates. 
\vspace{5pt}

\noindent
{\bf{5.}}
After the training is completed,
the initial price is computed using the sharp boundary and 
$J= 2^{22} = 4,194,304$ many Monte Carlo simulations. 

\subsection{One-dimensional Put Option} 
\label{s.putBS}
As an initial study, we consider an at the money
 Bermudan put option  in the Black Scholes model.  Hence,
 $\varphi(t,s) = e^{-rt}(K-s)^+$, $\eta=-1$,  and $X_t=S_t$ with $\cX=\R_+$.
 
 We take the same model and parameters 
 as in  Section 4.3.1.2 of \cite{Becker2}. Namely,
 $S_t$ follows the Black-Scholes model with parameters
\begin{equation}
\label{e.p1}
s_0=K=40, \ r = 6\%,\ \sigma = 40\%,\ T=1, \ n=50,
\end{equation}
so that $\cT = \{0,\frac{1}{50},\ldots,1\}$.  
For efficient training,
we want the  the stock price to reach low values 
to cross the boundary.  Towards this goal, we use
use importance sampling with $\lambda =0.275$,
so that
$S$ becomes a super-martingale under $\Q_{\lambda}$ 
with $5\%$ negative drift. The fuzzy region width is approximately equal to 
$\epsilon = K\sigma \sqrt{T/n}$.  

It is classical that the existence of a stopping boundary Assumption
\ref{asm:star} holds with $\alpha(s)=s$ and $\Xi(s)\equiv0$.
Table \ref{tab:resultPutBS} summarizes the 
performance values
and compares it to the  values computed by Becker \emph{et.al.}~\cite{Becker1}. 
The first column in that table
is the average price of ten experiments, 
and the second column the highest price. 
In practice, one would of course retain 
the trained boundary yielding the highest price. 

\begin{table}[H]
\begin{center}
{\bf{Bermudan with parameters as in \eqref{e.p1}}}
\vspace{5pt}

\begin{tabular}{cccc}
\hline \hline
Average Price  & Highest Price & Runtime  & Price in \cite{Becker2}  \\
\hline  \hline
5.308 (0.003)  &  5.313 (0.003) &  57.9 & 5.311 \\
\hline \\[-1em]
\end{tabular}
\caption{Bermudan Put Option with $n=50$ in Black-Scholes model.
First  column is the  average  of ten experiments with its standard deviation in brackets.
The second column is the  highest price  among the realizations and its standard deviation in brackets. 
Third column is the  average runtime (in seconds) per experiment for the training phase.}
\label{tab:resultPutBS}
\end{center}
\end{table}

Figure \ref{fig:putBS} displays the initial boundary 
and the trained one.  The optimal
boundary, shown in blue, is computed
using a finite-difference scheme. 
Figure \ref{fig:putBS2} is the approximation
of the  optimal stopping boundary for the American option
with same parameters.  As the American option
does not restrict the trading dates, we  
use $n=250$.  Also,  to capture the known
singular behavior of the optimal boundary,
we use  a  
non-uniform discretization that is finer near 
maturity.

\begin{figure}[H]
\begin{center}
\includegraphics[height=2in,width=5in]{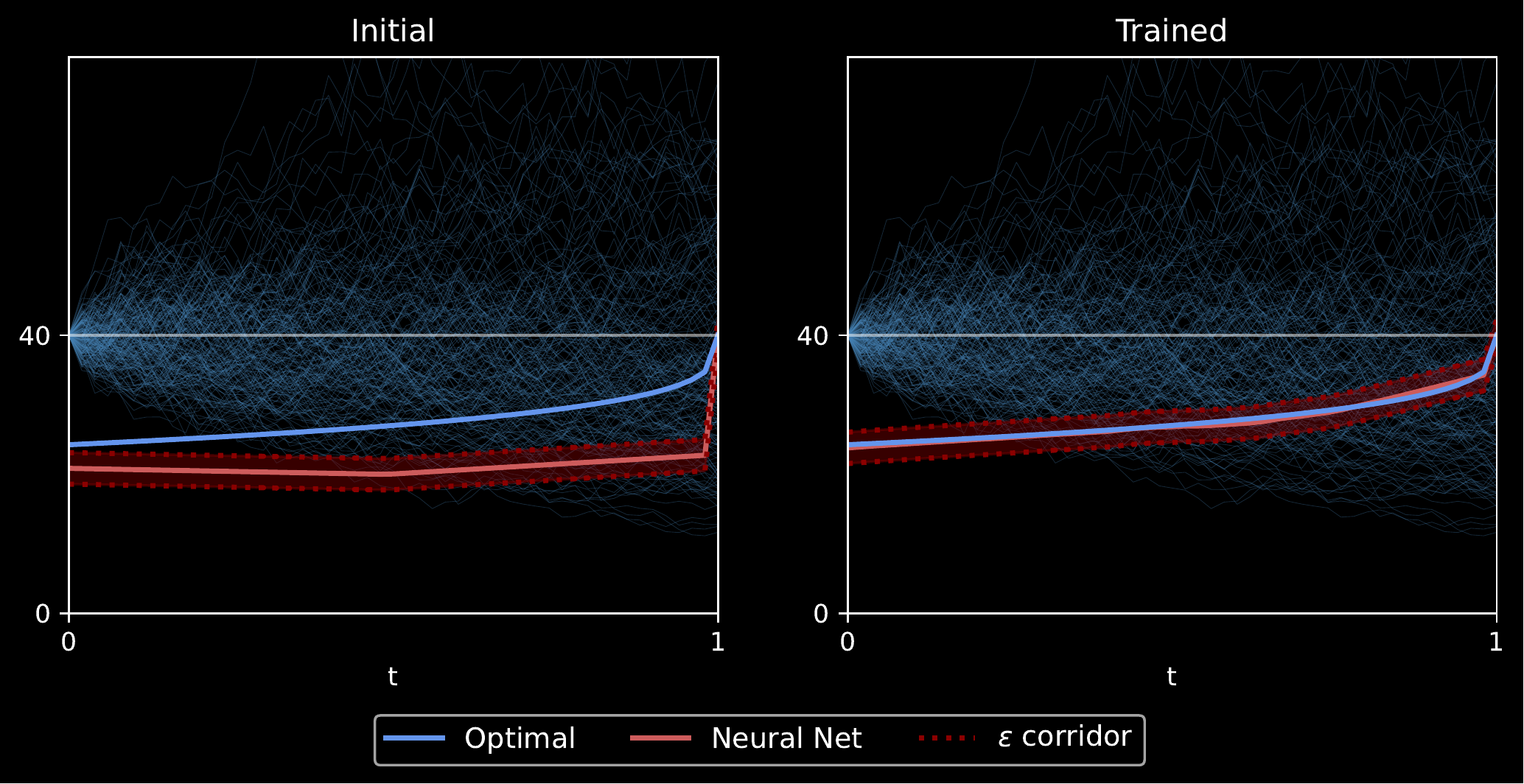}
\caption{Black-Scholes model with parameters
as in \eqref{e.p1}.
Blue trajectories are simulated paths.}%
\label{fig:putBS}
\end{center}
\end{figure}

\begin{figure}[H]
\begin{center}
\includegraphics[height=1.8in,width=5in]{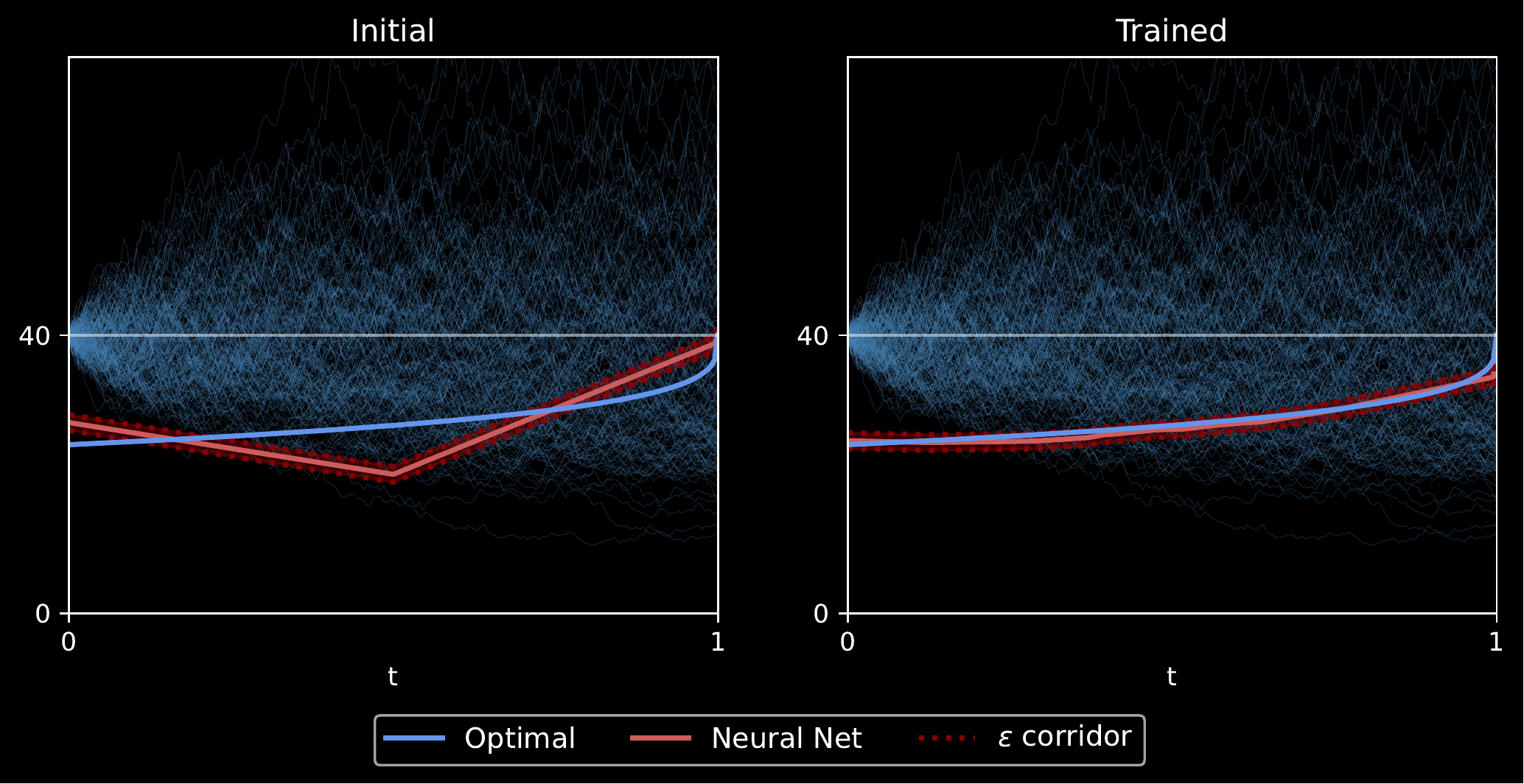}
\caption{Almost American option in
Black-Scholes model with parameters
 as in \eqref{e.p1} except $n=250$.}%
\label{fig:putBS2}
\end{center}
\end{figure}

\subsection{Put Option in the Heston Model} 
\label{s.putH}

Consider the Bermudan put option of the  Section \ref{s.putBS}
(i.e., $s_0 =K=40$, $T=1$, $n=50$) but
now in the Heston model.   
Then, under a risk-neutral measure $\Q$,
the stock and factor dynamics are given by,
\begin{align*}
\frac{dS_t}{S_t} &= r dt + \sqrt{Y_t}\  dW_t, \\
dY_t &= (\kappa(\bar{y}-Y_t) - \gamma_{\Q} Y_t)dt + 
\sigma_{Y} \sqrt{Y_t}\  d\tilde{W}_t, 
\end{align*}
where  $W, \tilde{W}$ are Brownian motions
with constant correlations  $\rho$,
and $ \gamma_{\Q}$ is the volatility risk premium
related to the risk-neutral measure $\Q$.
We use the parameters
 $$
 r = 6\%,\ \kappa = 1,\
 \sqrt{Y_0} = \sqrt{\bar{y}} = 40\%,  \ 
 \gamma_{\Q} =0,\  \sigma_{Y} =0.5,\ \rho =-0.5.
 $$ 
 In particular, 
the Feller condition $2 \kappa  \bar{y} \ge  \sigma_{Y}^2$ is satisfied.  
To allow for a comparison
with results of Section \ref{s.putBS},  
we set the initial value and the long term mean $\sqrt{\bar{y}}$ of 
the stochastic volatility $\sqrt{Y}$ 
equal to the Black-Scholes volatility $\sigma$ from
that section.   Moreover, we use importance sampling with the same 
Girsanov parameter, $\lambda =  0.275$. 
The  variance process is simulated using the Milstein scheme \cite{KP} 
 with $n' =4n=200$ time steps.   
 
 In this example, $X=(S,Y)$, $\cX=\R_+^2$,
 $\alpha(x)=s$, $\beta \equiv 0$, and $\Xi(x)=(1,y)$.  We simplify
and set  $\Xi$  to $y$.
Figure \ref{fig:heston1} shows that the stopping boundary is a function
of time $t$ and the spot variance $Y_t =y$.  
As $\varphi$ is convex and bounded,  
 the value function $v(t,s,y)$
is  non-decreasing in $y$ for all $t\in [0,T]$, cf.~\cite{LambertonHeston}.
Consequently, the map $y \mapsto f(t,y)$ is non-increasing
and the trained neural network $g(\cdot,\cdot;\theta^M)$
has the same property as seen
 in  \cref{fig:heston1}.
Notice also that $y \mapsto f(t,y)$ becomes less steep as $t$ 
approaches maturity. This is line with the fact that the 
vega of the option  decreases over time prior to the exercise date.  
Since $s \mapsto v(t,s,y)$ is non-increasing for fixed $(t,y)$,  
we therefore expect that the  rectangle $[0,s]\times [0,y]$ 
is contained in $\calS_t$ whenever $(s,y) \in \calS_t$. 
This is also confirmed in  \cref{fig:heston1}. 
\vspace{10pt}
 
\begin{figure}[h]
\centering
\includegraphics[height=1.8in,width=5in]{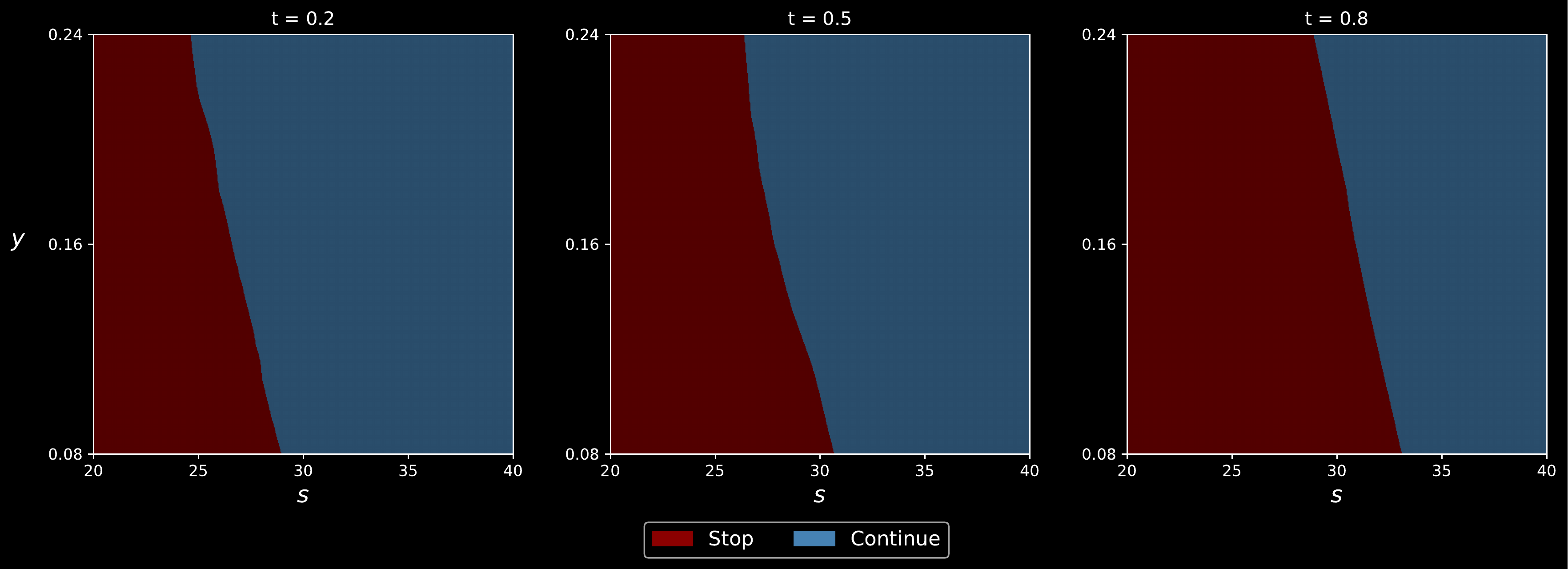}
\caption{Stopping and continuation regions for a Bermudan put in the Heston model.}
\label{fig:heston1}
\end{figure}
 Table \ref{tab:resultPutHeston} reports the prices 
 of the claim.  Note that the longer runtime compared to Table \ref{tab:resultPutBS} 
 is due to the thin partition needed to simulate the spot variance process $Y$.
\begin{table}[H]
\centering 
{\bf{Bermudan Put Option ($n=50$), Heston model}}
\vspace{3pt}

\begin{tabular}{c c c c }
\hline  \hline
  Average Price& Highest Price & Runtime  \\
\hline  \hline
 5.033 (0.003) & 5.037 (0.003) &  201.0 \\
\hline\\[-1em]
\end{tabular}
\caption{First column is the
average  of ten experiments with its standard deviation in brackets. 
Second column is the highest price  
among the realizations and its standard deviation in brackets.  
Third column is the  average run-time (in seconds) per experiment for the training phase.}
\label{tab:resultPutHeston}
\end{table}

Finally, \cref{fig:heston2} displays a trajectory of $(X,Y)$ together with the 
trained stopping boundary
 $g(\cdot;\theta^M)$. As can be seen, the threshold is typically below the 
 Black-Scholes boundary (red curve) when $Y_t$ is above its mean $\bar{y} = 0.16$ and vice versa. 

\begin{figure}[h]
\centering
\includegraphics[height=1.8in,width=3.4in]{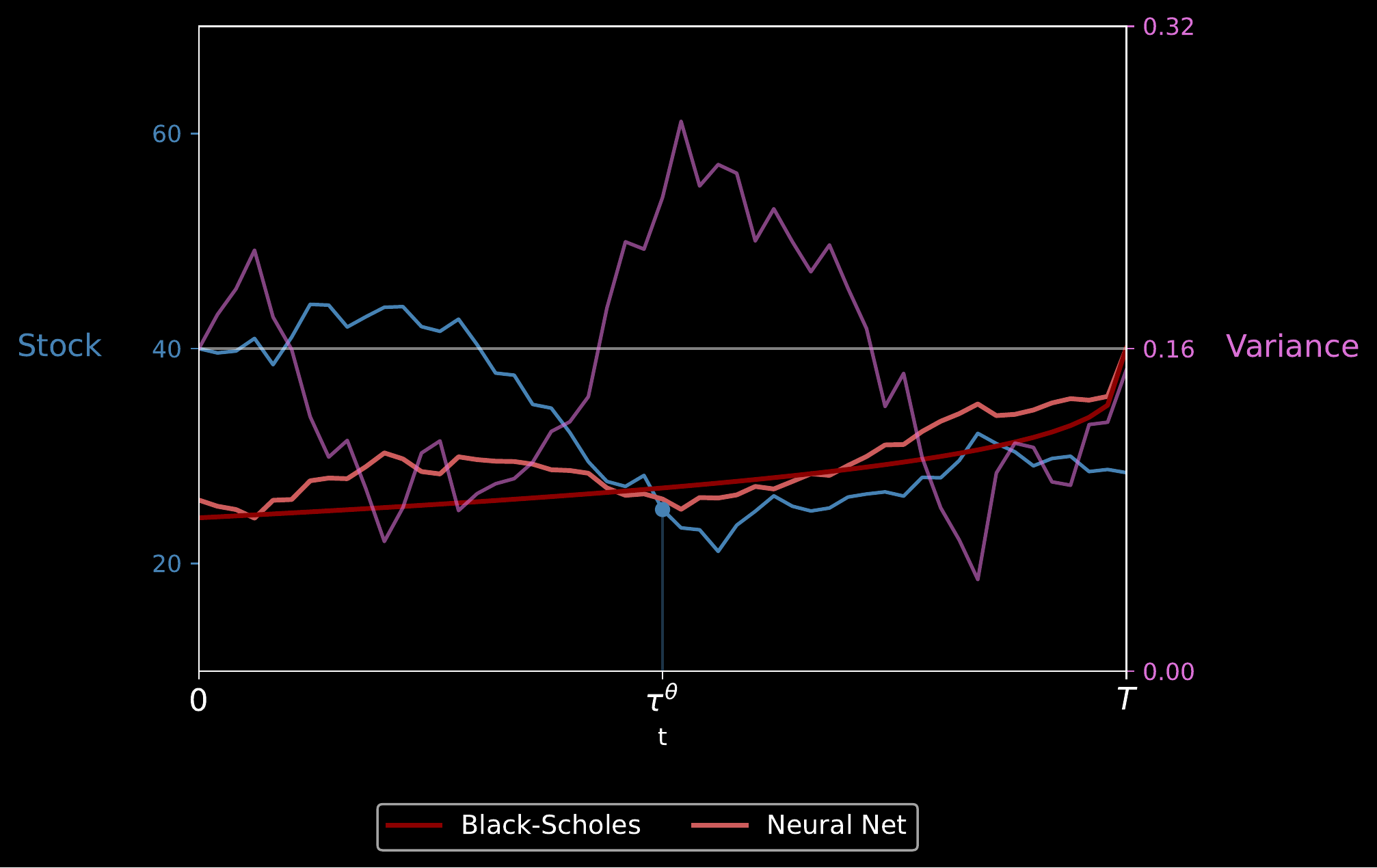}
 \caption{Stock price, variance and threshold process for a Bermudan put in the Heston model.}
\label{fig:heston2}
\end{figure}

\subsection{Bermudan Straddle}
\label{s.straddle}
Consider a Bermudan straddle  in the Black-Scholes model, that is $\varphi(t,s) = e^{-rt}|s-K|^+$, $X_t =S_t$, and $\calX = \R_+$. As already seen in Example~\ref{ex.straddle.coordinates},  the payoff is \textit{not} of the form \eqref{eq:payoff}. The purpose of this section  is therefore to demonstrate the flexibility of the method. 
 As argued in Example~\ref{ex.straddle.coordinates}, there exist \textit{two} boundaries $f^U,f^L: \calT^{\circ}\to [0,\infty]$ such that    
 $\calS = \calS^L \cup \calS^U,$ where  
 $\calS^{\text{U}}$ (respectively $\calS^{\text{L}}$)   is the epigraph of $f^U$ (resp. hypograph of $f^L$).

We consider the same Black-Scholes  parameters as in \eqref{e.p1}, except that the stock has  a dividend rate of $\delta = 0.06$. This is to prevent that the option is never exercised prior to maturity when the underlying is above the strike, in which case $f^U \equiv \infty$. The 

In the implementation, we reconfigure the neural network with a two-dimensional output and  train the lower and upper boundaries simultaneously. Because of the symmetry of the problem, it is reasonable to make the the underlying process a martingale so as to visit both high and low values. Since $r = \delta = 0.06$,  the drift is already  zero under $\Q$ so importance sampling is not used in this case. 
The results are given in Table \ref{tab:resultStraddleBS} and the two boundaries illustrated in \cref{fig:straddleBS}. As can be seen, the neural network can effectively handle multiple boundaries.  

\begin{table}[H]
\begin{center}
{\bf{Bermudan Straddle $(n=50)$, Black-Scholes model}}
\vspace{5pt}

\begin{tabular}{cccc}
\hline \hline
Average Price  & Highest Price & Runtime  & LSMC Price \\
\hline  \hline
12.080 (0.004)  &  12.087 (0.004) &  24.6 & 12.018 (0.004) \\
\hline \\[-1em]
\end{tabular}
\caption{Bermudan at-the-money Straddle with $n=50$ in Black-Scholes model.
First  column is the  average  of ten experiments with its standard deviation in brackets.
The second column is the  highest price  among the realizations and its standard deviation in brackets. 
Third column is the  average runtime (in seconds) per experiment for the training phase and fourth column is the price obtained with the Least square Monte Carlo algorithm.}
\label{tab:resultStraddleBS}
\end{center}
\end{table}

\begin{figure}[H]
\begin{center}
\includegraphics[height=2in,width=5in]{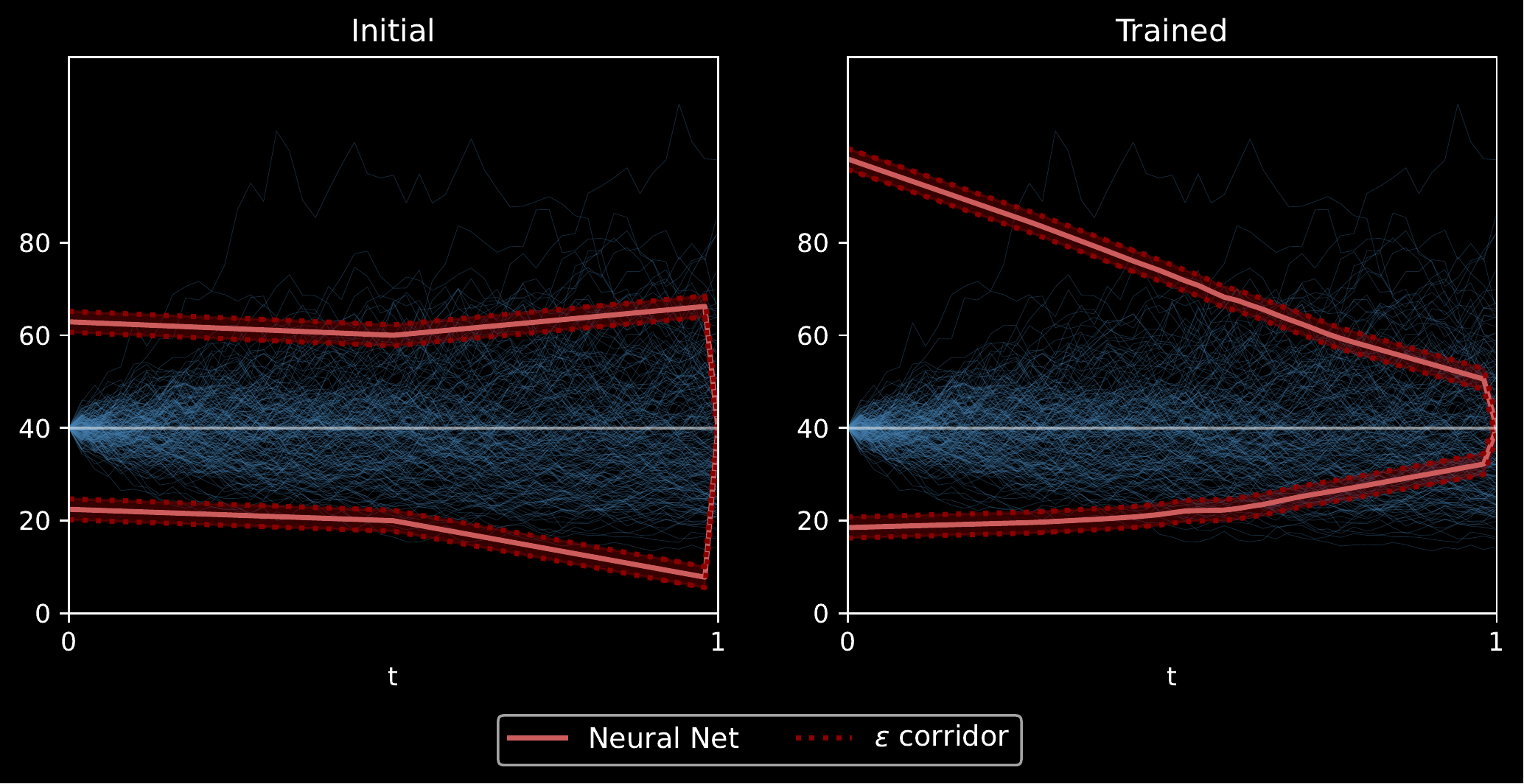}
\caption{Bermudan straddle ($n=50$) in the Black-Scholes model.}%
\label{fig:straddleBS}
\end{center}
\end{figure}


\subsection{Bermudan Max-Call}
\label{s.maxcall}

We now apply the deep empirical risk minimization
to the classical example  of the max-call option
on dividend paying stocks
as studied in the seminal paper by
Broadie \& Detemple \cite{BD},
and also in the book by Detemple  \cite{Dbook}.

We assume that
the stock price process
$S_t =(S^{(1)}_t,\ldots,S^{(d)}_t)\in \R_+^d$ 
is the state process and solves
$$
dS^{(i)}_t = S^{(i)}_t \Big ( (r-\delta_i)dt + \sigma_i dW^{(i)}_t\Big),
$$
where $\delta_i$ is the dividend rate. 
In this example, the reward function  is given by,
$$
\varphi(t,s) = e^{-rt} (\alpha(s) -K)^+, 
\quad
\text{where}
\quad\alpha(s) = \max\{ s_1,\ldots,s_d\}, \qquad 
s \in \R_+^d.
$$
Hence, $X=S$, $\cX=\R_+^d$, $\eta=1$, and $\beta \equiv 0$.
In view of Theorem \ref{t.fb}, the existence of a 
stopping boundary Assumption \ref{asm:star}
holds with $\alpha$
and $\Xi(s)=s/\alpha(s)$.
In two dimensions, the stopping 
regions can be visualized effectively
as seen in the Figures \ref{t.2}, \ref{t.3} 
reported from \cite{RST}.
These are stopping regions
in two space dimensions
obtained with 
initial data $s_0=90$ and $s_0=100$
with parameters  given in \eqref{e.p2}.
Clearly the stopping boundary is independent of
the initial condition
and the below numerical results
verify it.  Also they are very
similar to those obtained in \cite{BD}.

\begin{figure}[H]
\centering
\includegraphics[height=1.75in,width=5in]{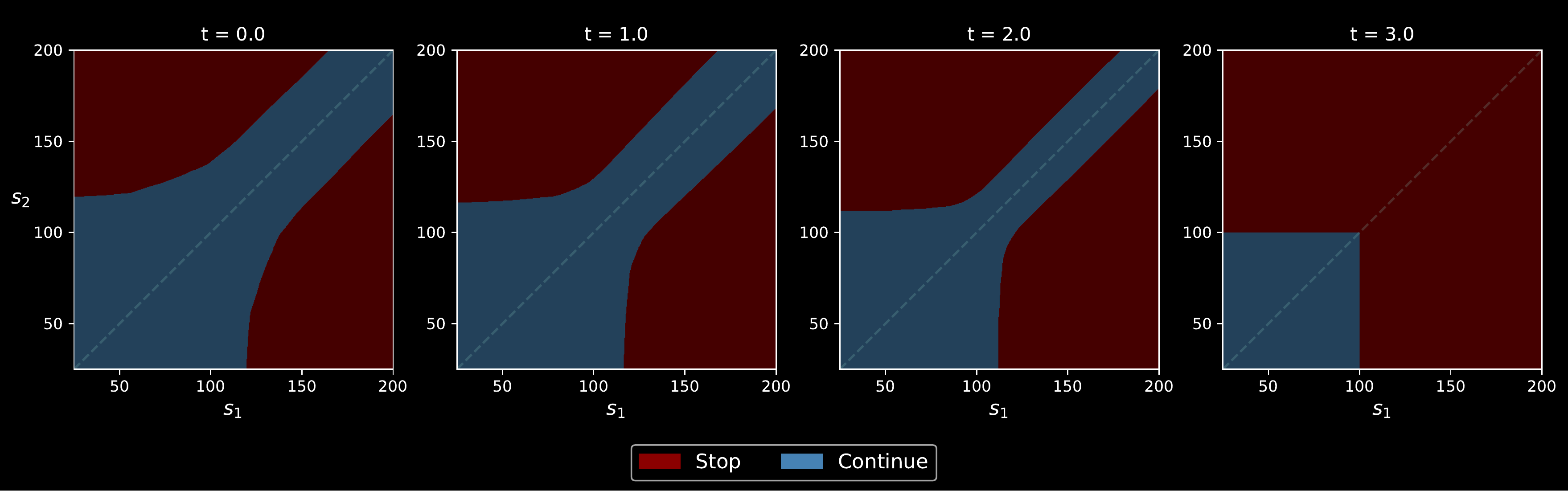} 
\caption{Stopping Boundaries with $s_0=90$, $K=100$, 
$\sigma_i =0.2$, $r =0.05$, $\delta_i =0.1$, reported from \cite{RST}.}
\label{t.2}
\end{figure}

\vspace{-3mm}

\begin{figure}[H]
\begin{center}
\includegraphics[height=1.75in,width=5in]{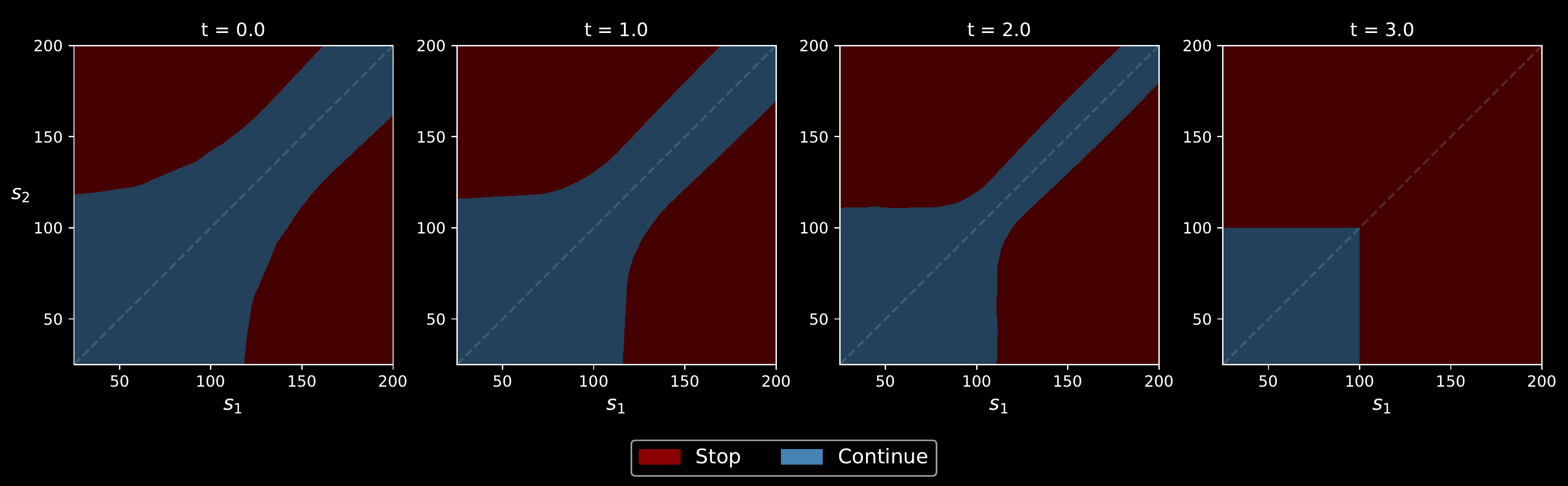}
\caption{Stopping Boundaries with $s_0=100$, $K=100$,
$\sigma_i =0.2$, $r =0.05$, $\delta_i =0.1$, reported from \cite{RST}.}
\label{t.3}
\end{center}
\end{figure}

\subsubsection{Symmetric Max-call}
\label{ss.num-mc-sy}
As in \cite{Becker1,Becker2}, we use the following parameter set:
\begin{equation}
\label{e.p2}
K=100,\ s_0= 100,\ \sigma_i =0.2,\ r =0.05,\ \delta_i =0.1.
\end{equation}
We take the maturity to be $3$ years and $n=9$.  Thus, each
time interval corresponds to four months. Numerical experiments
for these parameters are reported
in the accompanying paper \cite{RST}.
They compare well to the results obtained in \cite{Becker1,Becker2}.

We also compute the initial price of this max option on 
$d \in\{2,5,10,20,50\}$ assets.
Since we observe that the drift of the maximum stock
price process is of order $\ln(d)$, we 
use important sampling with $\lambda_i=(r-\delta_i-\mu(d))/\sigma$
and $\mu(d)= - 0.01\ln(d)$ to ensure that
the maximum stock value does not cross the stopping boundary too soon.
Utilizing the fact that the stocks
are exchangeable in this example, we order the stock prices $s_{(1)} \ge s_{(2)} \ge \ldots $ and
input the second to sixth largest ratios $(\frac{s_{(2)}}{\alpha(s)},\ldots, \frac{s_{(6)}}{\alpha(s)})$ to the neural network (note that the first ratio $\frac{s_{(1)}}{\alpha(s)} \equiv 1$ and is therefore omitted).   

 Table \ref{tab:symMaxOpt} summarizes the results. 
 Despite a low cutoff, the  obtained prices are close or above the benchmark.
 Moreover, the runtime remains moderate as $d$ increases. 
 The case $d=5$ is a classical example first 
 introduced by \citet{BroadieGlasserman} and well-studied later in the literature. 
Table \ref{tab:symMaxOpt} also contains the tight confidence intervals, obtained in \cite{AB}, \cite{BC}, and \cite{Becker1}
using a primal-dual method. 

\begin{table}[h]
\centering 
{\bf{Max-call option with parameters in \eqref{e.p2} and $s_0=100$}}
\vspace{5pt}

\centering
\begin{tabular}{ c c c c c c}
\hline  \hline
$d$ & Average Price& Highest Price & Runtime & Price in \cite{Becker2} & 
Confidence Intervals \\ 
\hline \hline
2   & 13.883 (0.009)  &  13.898 (0.008) & 29.1  & 13.901 & \ [13.892, 13.934] \; (\cite{AB}) \\
5   & 26.130 (0.010)  & 26.151 (0.009) & 31.8  & 26.147 &  \ [26.115, 26.164] \; (\cite{BC})\\
10   & 38.336 (0.015) &   38.355 (0.011)& 33.1 & 38.272 & \ [38.300, 38.367]  \; (\cite{Becker1})\\
20   & 51.728 (0.018) & 51.753 (0.011)  & 36.0 & 51.572  & \ [51.549, 51.803]
 \; (\cite{Becker1})\\
50  & 69.860 (0.012) & 69.881 (0.011) & 43.5 & 69.572   &  \ [69.560, 69.945]
 \; (\cite{Becker1})\\
\hline\\[-1em]
\end{tabular}
\caption{Max Option on $d \in\{5,10,20,50\}$ symmetric assets.
The second column is the  average  of ten experiments with its standard deviation in brackets. The third column is the highest price  among the realizations and its standard deviation in brackets.  
The fourth column is the  average runtime (in seconds) per experiment for the training phase.}
\label{tab:symMaxOpt}
\end{table}

\subsubsection{Asymmetric Max-call}
\label{ss.num-mc-nons}

We next investigate the Bermudan max option
with  asymmetric assets. We consider two 
stocks with same volatility $\sigma$ but different dividend rates. 
We use the same parameters for $K,r,\sigma$
as in  the previous subsection, and 
$s_0=100$, $\delta = (5 \%,15 \%)$. 
 
 We choose the parameter
 $\lambda$ in the Girsanov theorem,
 so as to make the assets symmetric under an equivalent measure. 
 More precisely, we choose 
 $\lambda = [(r+\mu(2))(1,1) - \delta]/{\sigma} \in \R^2$ 
 so that both assets have the same  drift $\mu(2) \in \R$ under $\Q_{\lambda}$,
 and $\mu(2)=-0.01\ln(2)$ is as in the previous subsection.  
 The asymmetry is then captured by the Radon-Nikodym derivative 
 appearing in the reward function.

The average and highest price after $10$ realizations 
are given in Table \ref{tab:asymMaxOpt}. The last column  is a 
benchmark price obtained from the Least Square 
Monte Carlo (LSMC) algorithm \cite{LS}  with $2^{22}$  simulations. 
As can be seen, our method in average computes a price comparable  
to the benchmark, but may construct better stopping strategies 
reported in the second column.
Notice also that having different dividends gives a premium
 over the symmetric case as shown in the first row of Table \ref{tab:symMaxOpt}.
 
\begin{table}[h!]
\centering 
{\bf{Max-call option with asymmetric assets}}
\vspace{5pt}

\centering
\begin{tabular}{ c c c c }
\hline  \hline
 Average Price& Highest Price & Runtime & LSMC Price\\
\hline \hline \\[-1em]
15.551 (0.014)  & 15.575 (0.011) & 30.9 & 15.558 (0.009) \\
\hline\\[-1em]
\end{tabular}
\caption{Max Option on $d=2$ asymmetric assets}
\label{tab:asymMaxOpt}
 \end{table}
  
\cref{fig:asymCall} displays the stopping and continuation
region over time. 
In particular, we clearly see the assymmetry of the problem and the stopping region reflected in the figure.
For $i=1,2$, let the 
the connected components of $\calS_t$
be given by,
$\calS^{(i)}_t := \left\{ s \in \cS_t 
:  \alpha(s) = s_i \right\}$.
In \cref{fig:asymCall}, we observe that
the light red region $\cI(\calS^{(1)}_t)$
with $\cI(s_1,s_2) := (s_2,s_1)$,
is contained in $\calS^{(2)}_t$. 
That is, if it is optimal to stop
at $S_t = (s_1,s_2)$, 
then the same is true at   $\cI(S_t)=(s_2,s_1)$. 
The structure of the assymmetry is therefore consistent with our expectations.

\begin{figure}[h]
\centering
\includegraphics[height=1.75in,width=5in]{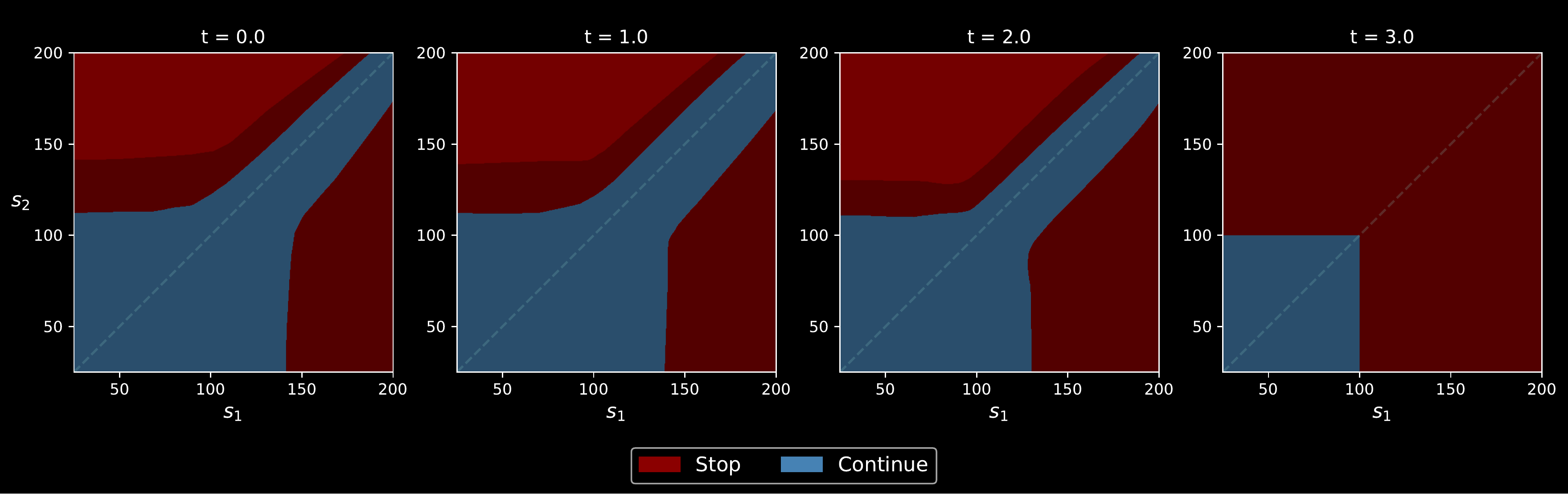} 
\caption{Stopping and continuation regions  for a max option with
$d=2$ assets and $T=3$.
The light red region is the reflection of the lower stopping region $\{ s \in \cS_t\ 
: \ s_1 \ge s_2\}$ through the diagonal.}
\label{fig:asymCall}   
\end{figure}

\begin{remark}
\label{r.others}

For 
max-call options on two assets,  \cite{Garcia}
also use a two-dimensional parametrization
of the stopping region given by
$$
\cS_t \approx \left\{ (s_1,s_2)\in \R_+^2\ :\
\alpha(s)=\max\{s_1,s_2\} \ge \theta_t^{1}, \ \text{and}\ \ 
|s_1-s_2| \ge \theta_t^{2}\ \right\}.
$$
While the above hypothesis class of regions
provides satisfactory numerical results, 
they are restricted to two dimensions,
and only give a  simple polygonal approximation 
of the actual stopping regions
that are more complex as can be seen in the 
Figures \ref{t.2}, \ref{t.3}, \ref{fig:asymCall}.
\end{remark}

\subsubsection{Up-and-Out Max-call} 
\label{s.UOBarrier}
We consider a Bermudan \textit{up-and-out} max-call option on $d \ge 2$ symmetric assets. That is, the payoff is given by 
\begin{equation*}
    \varphi(X_t) = (\alpha(S_t) - K)^+ \mathds{1}_{\{Z_t \le b\}}, \quad X_t = (S_t,Z_t) \in \R^{d+1}_{+}, \quad (\alpha(s) = \max\{ s_1,\ldots,s_d\})\\ 
\end{equation*} 
where $b$ is the \textit{barrier} and $Z_t = \overline{S}_t:=\max_{u\in[0,t]}  \alpha(S_u)$ is the running maximum across all assets.  
We take the  parameters from \cite{Moallemi}, namely
\begin{equation*}
\label{e.UOBarrier}
K=100,\ b = 170, \ s_0= 100,\ \sigma_i =0.2,\ r =0.05,\ \delta_i =0, \ T = 3, \ N = 54. 
\end{equation*}
In this example, 
we set $\Xi(x) = (s/\alpha(s),z)$ instead of $\Xi(x) = x/\alpha(s), \ x = (s,z)$. In other words, the raw running maximum is given to the boundary instead of the ratio $\bar{S}_t/\alpha(S_t)$ to directly compare $\bar{S}_t$ with the barrier level. We use the same cutoff (5 assets) and Girsanov parameter as in  \cref{ss.num-mc-sy}. 
The results are reported in \cref{tab:UOMax}. As can be seen, the average and highest  prices (first two columns) lies within the price intervals from \cite{Moallemi} (last column). Notice that the price of the up-and-out max-call option increases slowly with the number of assets ($d$) because the contract becomes more likely to be knocked out. Indeed, the drift of the running maximum process $\bar{S}$ increases with $d$, increasing the probability of hitting the barrier $b$. 
\begin{table}[h]
\centering 
{\bf{Up-and-Out Max-call option with parameters in \eqref{e.p2} and $s_0=100$}}
\vspace{5pt}

\centering
\begin{tabular}{ c c c c c }
\hline  \hline
$d$ & Average Price& Highest Price & Runtime & Price Interval from \cite{Moallemi}  
\\ 
\hline \hline
4   & 42.487 (0.010)  & 43.093 (0.009)& 49.2  &  \ [41.541, 43.853] \\
8   &  50.905 (0.007) & 51.379 (0.006) & 57.2  &   \ [50.252, 52.053] \\
16   & 53.650 (0.007) &  54.468 (0.006) & 62.7 & \ [53.638, 55.094]  \\
\hline\\[-1em]
\end{tabular}
\caption{Up-and-out max-call option on $d \in\{4,8,16\}$ symmetric assets.
The second column is the  average  of ten experiments with its standard deviation in brackets. The third column is the highest price  among the realizations and its standard deviation in brackets.  
The fourth column is the  average runtime (in seconds) per experiment for the training phase.}
\label{tab:UOMax}
\end{table}

\subsection{Look-back Options} 
\label{s.look-back}

\emph{Look-back options}  provides exposure to the 
minimum or maximum values attained 
during the tenure of the option.
There are several types of look-back options 
that are commonly traded as summarized
in the Table \ref{tab:lookback}.   In these examples,
when the stock price 
$S_t \in \R_+$ is modeled as a Markov process, 
then we would have $X=(S, Z)$, where $Z_t$ is either
the \emph{running maximum} $\overline{S}_t$ or 
\emph{running minimum} $\underline{S}_t$ given by,
$$
\overline{S}_t:=\max_{u\in[0,t]} S_u ,\qquad
\underline{S}_t:=\min_{u\in[0,t]} S_u .
$$
\begin{table}[H]
\centering
\begin{tabular}{lcrcc}
\hline \hline
\noalign{\vskip 0.5em}
Option & $e^{rt}\varphi(t,X_t)$ & $\eta$ & $\alpha(s,z)$ & $\beta(s,z)$ \\[0.5em] \hline \hline
\noalign{\vskip 0.5em}
Fixed Strike Call & $(\overline{S}_t - K)^{+} $ & $1$ & $z$ & $0$ \\[0.5em] \hline
\noalign{\vskip 0.5em}
Fixed Strike Put & $(K - \underline{S}_t)^{+} $ & $-1$ & $z$ & $0$ \\[0.5em] \hline
\noalign{\vskip 0.5em}
Floating Strike Call & $(S_t - \gamma \underline{S}_t)^+$  & $1$ & $s$ & $\gamma  z$ \\[0.5em] \hline
\noalign{\vskip 0.5em}
Floating Strike Put & $(\gamma \overline{S}_t  - S_t)^+$ & $-1$ & $s$ &  $\gamma z$ \\[0.3em] \hline 
\end{tabular}
\caption{Different Lookback Options.}
\label{tab:lookback}
\end{table}

Clearly, $X=(S,Z)$ is a  Markov process with
$\cX = \{x=(s,z) \in \R_+^2\ | \ 0< s \le z\}$, when $Z_t=\overline{S}_t$,
or $\cX = \{x=(s,z) \in \R_+^2\ | \ 0< z \le s\}$ when $Z_t=\underline{S}_t$.
The scaling factor $\gamma$ appearing in the payoff of floating strike 
options is introduced to reduce the price of these otherwise expensive contracts. 
We therefore choose $\gamma \in [1,\infty)$ and $\gamma \in (0,1]$ for call and put options, respectively.  
When $\gamma=1$, the floating strike look-back put (call) option delivers precisely the
drawdown (drawup) of the stock. 
We refer the reader to \citet{DaiKwok} 
for an introduction and discussion of American look-back claims.  

\subsubsection{Fixed strike call}
\label{ss.fsc}

Although our setting
covers all look-backs, as an example we only consider
the fixed strike call.  Then, 
$Z_t = \overline{S}_t$,
$\cX = \{x=(s,z) \in \R_+^2 \ | \  0 < s \le z\}$),
$\eta=1$, $\alpha(s,z) = z$.    
By Theorem~\ref{t.fb}, a stopping boundary exist with 
$\Xi(x) = (s/z,1)$.  
Notice that 
$1-(s/z) = (z-s)/z$ is 
 the relative drawdown of the stock. As in \cite{DaiKwok}, we use
\be\label{e.p3} 
T=1/2, \quad  r=2\%, \quad \delta =4\%, \quad  \sigma = 30 \%.
\ee 
To approximate the American option,
we consider a regular partition  with $n=200$ many exercise dates. 
Moreover, we employ an even thinner  partition with $n'=800$ points 
for our simulations.
This allows us to better approximate  the running maximum. 
Importance sampling is not needed in this example.
We visualize the the stopping region in the $(s,y)$ plane, similar to Fig.~3 in \cite{DaiKwok},
where the authors use a finite difference scheme to compute the initial price.

\begin{table}[H]
  \centering

  \begin{tabular}{ccccc}
 \hline \hline
    Average Price  & Highest Price & Runtime   & European Price & Upper Bound\\
  \hline  \hline \\[-1em]
  16.827   (0.008)  &  16.844 (0.007) & 216.7  &  16.808 & 16.979 \\
  \hline 
\end{tabular}
  \caption{American look-back call with parameters in \eqref{e.p3} and $K=100$. }
\label{tab:resultLkbk}
  \end{table}
  
Table \ref{tab:resultLkbk} summarizes the result for $s_0=K=100$. 
 First  column in that table is the average of ten experiments 
 with the standard deviation in brackets. The second column is the highest price  
 among the realizations and its standard deviation in brackets.  
 The third column is the  average runtime (in seconds) per experiment for the training phase. 

 The last column is the upper bound given by the forward value of the European price.
The European price, denoted by $v_e$ provides a 
lower bound on the American price. 
The  last  column is the  forward value of European price,
 $e^{rT}v_e$, giving an upper bound for the American price.
 Since here $e^{rT} \approx 1.01$, the American option only offers little premium 
 over its European counterparts. This is nevertheless  fortunate from a 
 performance perspective as it gives a tight interval in which the American price 
 must lie (see, e.g., \cite{Conze}).  We indeed obtain prices that are within the bounds. 

 \begin{figure}[H]
\centering
\includegraphics[height=1.65in,width=5in]{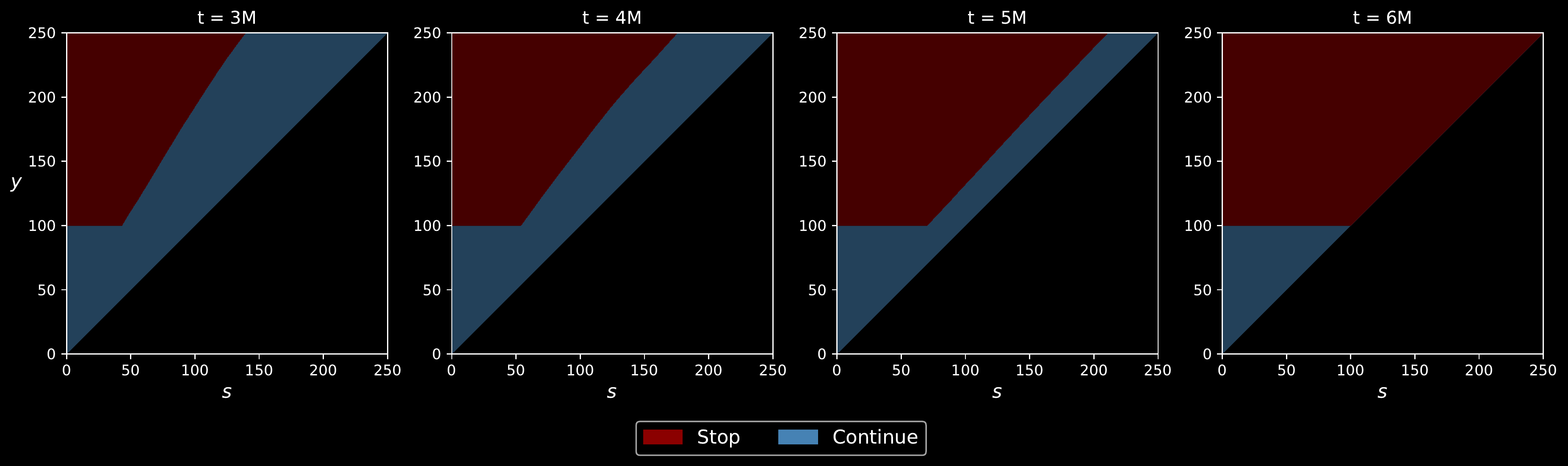}
\caption{Stopping boundaries for a look-back call  with parameters in \eqref{e.p3} and $K=100$.}
\label{fig:lkbkCall}
\end{figure}

We display the stopping and continuation region in \cref{fig:lkbkCall}. 
Obtaining an overall accurate boundary turns out to be challenging,
as visiting all the pairs $(s,y) \in \calX$ is difficult, especially when $s \ll y$. 
We effectively resolve this issue
by randomizing $s_0\in \calX$ and setting the initial running maximum to 
$y_0 = K \vee s_0$. As in Fig. $3$ of \cite{DaiKwok}, we observe a flat boundary when $s\ll K$. 
That is, the neural network correctly set the boundary  equal to the strike when 
the ratio $s/y$ is small, or equivalently, when the relative drawdown is large.

\section{Conclusion}

We have developed an algorithm for the
computation of the ``graph-like'' stopping boundaries
separating the continuation and stopping regions
in optimal stopping.  While the method at the high-level is 
empirical risk minimization, a relaxation 
based on fuzzy boundaries motivated from phase-field 
models for liquid-to-solid phase transitions is used.
Through numerical experiments, the method is shown 
to be effective in high-dimensions.  The method has the potential
to incorporate market details like price impact, transaction costs,
and market restrictions.  It is also possible to apply the technique
to other models from financial economics
and other obstacle type problems as long as a 
control representation is available.

\appendix 

\section{Proof of Theorem \ref{t.fb}}
\label{a.existence}

We prove the theorem when the parameter $\eta$ in \eqref{eq:payoff}
is equal to $-1$. The other case of $\eta=1$ is proved \emph{mutatis mutandis}.
We fix $t \in \cT^\circ$.

Firstly, it is clear that for $\Xi$ as in \eqref{e.xi},
the map $x=(s,y,z) \mapsto (\alpha(s,z), \Xi(x))$ is a
homeomorphism and $\alpha$ is onto. 

Recall that $\vartheta$ is the set of all
$\cT$-valued stopping times, $\vartheta_t=\vartheta\cap[t,T]$,
the pay-off $\varphi(t,x)=\varphi(t,s,z)$ is given by \eqref{eq:payoff},
and the European price, $v_e$, is strictly positive.
As the European price is equal to price of
the stopping time $\tau\equiv T$,
$v(t,x) \ge v_e >0$.  Therefore,   the stopping region is 
characterized by
$$
x \in \cS_t
\quad
\Leftrightarrow
\quad
v(t,x) =\varphi(t,s,z)>0.
$$

We start by showing that 
$\cS_t$ is star-shaped around the origin.
\begin{lemma}
\label{l.geom}
If $x=(s,y,z)  \in \cS_t$, then $(\gamma s, y, \gamma z) \in \cS_t$
for every $\gamma \in (0,1]$.
\end{lemma}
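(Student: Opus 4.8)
The plan is to deduce the star-shapedness of $\cS_t$ from a one-sided scaling estimate for the value function. Throughout fix $\gamma\in(0,1]$ and write $\gamma x:=(\gamma s,y,\gamma z)$ for $x=(s,y,z)$ (the same abuse of notation as in the scaling hypothesis); note that $\cX$ is invariant under $x\mapsto\gamma x$, as is implicit in that hypothesis. Since $\eta=-1$, the payoff is $\varphi(u,s',z')=e^{-ru}\big(K-A(s',z')\big)^+$ with $A:=\alpha-\beta$ positively homogeneous of degree one. From the identity $K-\gamma A'=\gamma(K-A')+(1-\gamma)K$, subadditivity of $(\cdot)^+$, and $K\ge 0$, I first record the pointwise bound
\[
\varphi\big(u,\gamma s',\gamma z'\big)\ \le\ \gamma\,\varphi(u,s',z')+(1-\gamma)\,e^{-ru}K,\qquad u\in\cT .
\]

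The heart of the argument is the claim that for every $t_k\in\cT$, every $x\in\cX$ and every $\gamma\in(0,1]$,
\[
v(t_k,\gamma x)\ \le\ \gamma\,v(t_k,x)+(1-\gamma)\,e^{-rt_k}K ,
\]
which I will prove by backward induction in time using the dynamic programming principle $v(t_k,x)=\max\{\varphi(t_k,x),\ \E[v(t_{k+1},X_{t_{k+1}})\mid X_{t_k}=x]\}$ from the proof of Proposition~\ref{p.vlsc}. At $t_n=T$ the claim is exactly the pointwise bound above. For the inductive step, the $\varphi$-term in the DPP at $\gamma x$ is dominated using the pointwise bound and $v\ge\varphi$; the continuation term is rewritten, via the linear-scaling hypothesis on $X$ applied with test function $v(t_{k+1},\cdot)$ (admissible after truncating at level $N$ and letting $N\to\infty$ by monotone convergence, since $v\ge 0$), as $\E[v(t_{k+1},\gamma X_{t_{k+1}})\mid X_{t_k}=x]$; then the induction hypothesis, the monotonicity $e^{-rt_{k+1}}\le e^{-rt_k}$ (here $r\ge 0$ is used), and the DPP at $x$ close the step.

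Granting this estimate, the lemma follows quickly. If $x\in\cS_t$ then $v(t,x)=\varphi(t,x)$, and since the European price is strictly positive we have $v(t,x)>0$, hence $A(s,z)<K$ and $\varphi(t,x)=e^{-rt}\big(K-A(s,z)\big)$. Substituting,
\[
v(t,\gamma x)\ \le\ \gamma\,e^{-rt}\big(K-A(s,z)\big)+(1-\gamma)\,e^{-rt}K\ =\ e^{-rt}\big(K-\gamma A(s,z)\big)\ =\ e^{-rt}\big(K-\gamma A(s,z)\big)^{+}\ =\ \varphi\big(t,\gamma s,\gamma z\big),
\]
where $K-\gamma A(s,z)\ge 0$ since $\gamma\le 1$ and $A(s,z)<K$ (whether or not $A(s,z)\ge 0$). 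Combined with the always-valid inequality $v\ge\varphi$, this yields $v(t,\gamma x)=\varphi(t,\gamma x)$, i.e. $\gamma x\in\cS_t$.

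I expect two points to demand care. First, the linear-scaling identity is stated for bounded test functions whereas it is applied to the (nonnegative, linearly growing) value function; the truncation argument above resolves this. Second — and this is the conceptual crux — the scaling estimate by itself is not sharp enough to conclude: one must already know that $\varphi(t,\cdot)>0$ on $\cS_t$, which is precisely the role of the hypothesis that the European option has strictly positive price.
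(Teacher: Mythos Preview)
Your argument is correct and follows essentially the same route as the paper: both derive the scaling estimate $v(t,\gamma x)\le \gamma\,v(t,x)+(1-\gamma)e^{-rt}K$ from the pointwise bound on $\varphi$ and then conclude via the strict positivity of the European price. The only organizational difference is that the paper bounds $v(t,\gamma x)$ directly as a supremum over stopping times, whereas you obtain the same inequality by backward induction through the dynamic programming recursion; your version has the minor advantage of invoking the linear-scaling hypothesis only at deterministic times and with (truncated) bounded test functions, exactly as stated in Theorem~\ref{t.fb}.
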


\begin{proof}
Fix $x=(s,y,z)  \in \cS_t$ and $\gamma \in (0,1]$.
Then, by the definition of the stopping region $\cS_t$,
$v(t,x) =\varphi(t,s,z)>0$. In particular, as $\eta=-1$,
$\alpha(s,z)-\beta(s,z) < K$ and therefore, 
$$
\alpha(\gamma s,\gamma z) -\beta(\gamma s,\gamma z)
=\gamma(\alpha(s,z)-\beta(s,z) ) < \gamma K \le K.
$$
Hence, $\varphi(t,\gamma s,\gamma z) = - (\alpha(\gamma s,\gamma z) -\beta(\gamma s,\gamma z)-K)>0$.
Moreover,
$$
\E\left[ \varphi(\tau,S_\tau,Z_\tau) \mid  X_t =(\gamma s, y, \gamma z)\right]
= \E\left[ \varphi(\tau,\gamma S_\tau,\gamma Z_\tau) \mid  X_t =(s,y,z)\right].
$$
for all  $t \in \cT^\circ$, $x=(s,y,z)$, $\tau \in \vartheta_t$, and $\gamma>0$.
By \eqref{eq:payoff} and the homogeneity of $\alpha$ and $\beta$,
\begin{align*}
\varphi(\tau,\gamma S_\tau, \gamma Z_\tau) &= 
e^{-r\tau} \left(-(\alpha(\gamma S_\tau, \gamma Z_\tau)
- \beta(\gamma S_\tau, \gamma Z_\tau) -K)\right)^+\\
&= e^{-r\tau} \left(-(\gamma \alpha(S_\tau,Z_\tau)
- \gamma \beta(S_\tau,Z_\tau)-K)\right)^+\\
&\le e^{-r\tau} \gamma \left(-( \alpha(S_\tau,Z_\tau)
-  \beta(S_\tau,Z_\tau)-K)\right)^+  + e^{-r\tau } (1-\gamma) K\\
& = \gamma \varphi(\tau,S_\tau,Z_\tau) + e^{-r\tau } (1-\gamma) K.
\end{align*}
We combine the above inequalities to arrive at
the following:
\begin{align*}
\E\left[ \varphi(\tau,S_\tau,Z_\tau) \mid  X_t =(\gamma s, y, \gamma z)\right]
& \le \gamma \E\left[ \varphi(\tau, S_\tau,Z_\tau) \mid  X_t =(s,y,z)\right]
+  e^{-rt}(\gamma-1) K\\
& \le \gamma v(t,x)+ e^{-rt}(1-\gamma) K \\
& = \gamma \varphi(t,x)+ e^{-rt}(1-\gamma) K.
\end{align*}
As $v(t,x)=\varphi(t,s,z)>0$,
again by the homogeneity of $\alpha$ and $\beta$,
\begin{align*}
\varphi(t,\gamma s, \gamma z)&=v(t,x)\\ & = \sup_{\tau \in \vartheta_t}
\E\left[ \varphi(\tau,S_\tau,Z_\tau) \mid  X_t =(\gamma s, y, \gamma z)\right]
\\
&\le \gamma \varphi(t,s,z)+  e^{-rt}(1-\gamma) K\\
&= e^{-rt}\left(
-(\gamma \alpha(s,z)- \gamma \beta(s,z) -\gamma K)\right) +e^{-rt}(1-\gamma) K\\
&= e^{-rt}\left(-(
\alpha(\gamma s,\gamma z)- \beta(\gamma s,\gamma z) -  K)\right)\\
&= \varphi(t,\gamma s,\gamma z).
\end{align*}
Hence, $(\gamma s, y, \gamma z) \in \cS_t$.
\end{proof}

Let $f^*$
be as in  \eqref{e.fstar}. It is clear that if $\alpha(\bar{x}) > f^*(t,\Xi(\bar{x}))$,
then $\bar{x}\not \in \cS_t$.
Now, suppose $\alpha(\bar{x}) < f^*(t,\Xi(\bar{x}))$
for some $\bar{x}=(\bar{s}, \bar{y},\bar{z})$.  Then, 
by the above  definition, there exists 
$x=(s,y,z)\in \cS_t$ such that $ \alpha(x) \ge \alpha(\bar{x})$
and $\Xi(x)=\Xi(\bar{x})$.
By the definition of $\Xi$, we conclude that $y=\bar{y}$
and $(\bar{s},\bar{z})= (\gamma s,\gamma z)$
with $\gamma := \alpha(\bar{s},\bar{z})/\alpha(s,z)$.
Therefore, 
$(\gamma s, y, \gamma z) = \bar{x}$, and as $\gamma \le 1$,
by the above lemma
$\bar{x} \in \cS_t$.  Summarizing, we have shown that
$$
\{ \alpha < f^*\} \subset  \cS_t,\quad
\text{and}
\quad
\{ \alpha > f^*\}   \subset  \cX \setminus \cS_t.
$$
Now suppose that $\alpha(x)=f^*(t,\Xi(x))$.
Since $\alpha$ is always strictly positive, $f^*(t,\Xi(x))>0$.  Then, there is a sequence $x_n \in \cS_t$
with $\Xi(x_n)=\Xi(x)$ and $\alpha(x_n) \uparrow f^*(t,\Xi(x))=\alpha(x)$.
As $(\alpha, \Xi)$ is a homeomorphism, we 
conclude that $x_n$ converges to $x$.  Moreover,
$\cS_t$ is relatively closed in $\cX$, implying that
 $x \in \cS_t$.  Hence, $\{ \alpha = f^*\} \subset  \cS_t$.
Consequently, the triplet $(\alpha, \Xi, f^*)$ satisfies \eqref{e.st},
and the existence of a stopping boundary
Assumption \ref{asm:star} holds.

\section{Convergence of $\cR_\eps$}
\label{a.re}

The following limit result
justifies our choice of the reward function $\cR_\eps$. Further details are given in \cite{SonerTissot}. 

\begin{lemma}
\label{l.conv}
Suppose that $X_t$ has no atoms for all $t\in(0,T]$
 and $x_0=X_0$
is not on the boundary $f$, i.e., $\alpha(x_0) \neq f(0,\Xi(x_0))$.
Then, 
\begin{equation}
\label{e.elle}
\lim_{\eps \downarrow 0}\  \E[\cR_\eps(X,f)]=
\E[\varphi(\tau_f,X_{\tau_f})]= v(\tau_f).
\end{equation}
\end{lemma}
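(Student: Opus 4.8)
The equality $\E[\varphi(\tau_f,X_{\tau_f})]=v(\tau_f)$ is immediate from the definition $v(\tau):=\E[\varphi(\tau,X_\tau)]$ in \eqref{eq:OS} together with $\tau_f\in\vartheta$, so the whole content is the first limit in \eqref{e.elle}. The plan is to show that, outside a single $\Q$-null set, the fuzzy reward $\cR_\eps(X,f)$ is in fact \emph{equal} to the sharp reward $\varphi(\tau_f,X_{\tau_f})$ for all sufficiently small $\eps$, and then to pass to the limit by dominated convergence.

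First I would carry out the pathwise step. Write $d_u:=\eta\bigl(f(u,\Xi(X_u))-\alpha(X_u)\bigr)$ for $u\in\cT^\circ$ and set $\delta(X):=\min_{u\in\cT^\circ}|d_u|$. Inspecting $p_u(x,f)=\left(\tfrac{\eps-d(u,x,f)}{2\eps}\right)^+\wedge1$, one sees that $p_u(X_u,f)=0$ as soon as $\eps<d_u$ (when $d_u>0$) and $p_u(X_u,f)=1$ as soon as $\eps<|d_u|$ (when $d_u<0$). Hence on the event $\{\delta(X)>0\}$ and for every $0<\eps<\delta(X)$ we have $p_u(X_u,f)=\mathds{1}_{\{d_u<0\}}$ for all $u\in\cT^\circ$, while $p_T\equiv1$. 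Then the budget recursion $b_{t+1}=b_t(1-p_t)$ reproduces exactly the deterministic rule ``stop at the first $u\in\cT^\circ$ with $d_u\le0$, otherwise at $T$'', so the sum \eqref{e.ell} collapses to its single surviving term and $\cR_\eps(X,f)=\varphi(\tau_f,X_{\tau_f})$. In particular $\cR_\eps(X,f)\to\varphi(\tau_f,X_{\tau_f})$ as $\eps\downarrow0$ at every point of $\{\delta(X)>0\}$.

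The hard part is to verify $\Q\bigl(\delta(X)=0\bigr)=0$. Since $X_0=x_0$ is deterministic, $\{\delta(X)=0\}\subseteq\{\alpha(x_0)=f(0,\Xi(x_0))\}\cup\bigcup_{u\in\cT^\circ,\,u>0}\{X_u\in N_u\}$, where $N_u:=\{x\in\cX:\alpha(x)=f(u,\Xi(x))\}$. The first set is empty by the hypothesis $\alpha(x_0)\neq f(0,\Xi(x_0))$. For the remaining sets I would use that $x\mapsto(\alpha(x),\Xi(x))$ is a homeomorphism by Assumption~\ref{asm:star}: in these coordinates $N_u$ is the graph $\{\alpha=f(u,\xi)\}$, whose slices over a fixed $\xi$ are singletons, so $N_u$ is Lebesgue-null by Fubini; the non-atomicity of the law of $X_u$ (in all our applications this law is in fact absolutely continuous, which is the property the argument really uses) then gives $\Q(X_u\in N_u)=0$, and a finite union of null sets is null. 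Thus $\cR_\eps(X,f)\to\varphi(\tau_f,X_{\tau_f})$ $\Q$-a.s. I expect this to be the one place where care is needed, since a literal ``no atoms'' hypothesis permits the law of $X_u$ to sit entirely on $N_u$; it is harmless here because in every model we treat $X_u$ has a density.

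Finally I would supply an integrable dominating function. Using $b_tp_t=b_t-b_{t+1}$ on $\cT^\circ$ together with $b_Tp_T=b_T$, the weights telescope to $\sum_{t\in\cT}p_t(X_t,f)\,b_t(X,f)=1$, so $0\le\cR_\eps(X,f)\le\max_{t\in\cT}|\varphi(t,X_t)|\le C\sum_{t\in\cT}\bigl(1+|X_t|^a\bigr)$ by the growth part of Assumption~\ref{a.growth}. Because $\cT$ is finite and $\E[|X_t|^a]=v_{|\cdot|^a}(0,x_0,t)<\infty$ for each $t$ (applying Assumption~\ref{a.growth} to $\phi=|\cdot|^a\in\cL_a$), this bound is integrable uniformly in $\eps$, and dominated convergence yields $\E[\cR_\eps(X,f)]\to\E[\varphi(\tau_f,X_{\tau_f})]=v(\tau_f)$, as claimed.
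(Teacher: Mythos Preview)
Your proof is correct and follows the same approach as the paper: almost-sure pointwise convergence of $\cR_\eps(X,f)$ to $\varphi(\tau_f,X_{\tau_f})$ followed by dominated convergence, with the same growth bound $C\sum_{t\in\cT}(1+|X_t|^a)$. Your version is in fact more careful on two points---you establish eventual \emph{equality} (not just convergence) once $\eps<\delta(X)$, and you correctly flag that the literal ``no atoms'' hypothesis does not by itself force $\Q(X_u\in N_u)=0$ for a graph $N_u$, whereas the paper simply asserts $d(t,X_t;f)\neq0$ a.s.\ without comment; the paper implicitly relies, as you note, on the stronger fact that in all the models considered $X_u$ has a density.
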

\begin{proof}
Under above assumptions, for every $t\in \cT^\circ$,
$ d(t,X_t;f)\neq0$ with probability one.
Then as $\eps$ tends to zero,
$p_t(X_t,f)$ converges to
one if $d(t,X_t;f)>0$ and
respectively, to zero if $d(t,X_t;f)<0$.
Consequently, $b_t(X,f)$ converges to
one for all $t\le \tau_f$ and
respectively, to zero for all $t>\tau_f$.
These limit statements directly imply that
$$
\lim_{\eps \downarrow 0}\ \cR_\eps(X,f)
= \varphi(\tau_f,X_{\tau_f}),\qquad
\text{a.s.}
$$
Moreover, as $b_t, p_t \in [0,1]$, Assumption \ref{a.growth}
implies that
$$
|\cR_\eps(X,f)| \le \sum_{t \in \cT} \varphi(t,X_t)
\le C \sum_{t \in \cT} [1+|X_t|^a]=:\Phi.
$$
Since $\cT$ is finite, by Assumption \ref{a.growth}, $\Phi$ is integrable.
Then, the claimed limit \eqref{e.elle} follows from 
the dominated convergence theorem.
\end{proof}

\section{Algorithm  Details}
\label{a.algo}

The following is a brief outline of the training
part of our algorithm.
\vspace{10pt}

{\underline{\bf{Stopping Boundary Training:}}}

\begin{enumerate}

\item \emph{Initialize} $\theta_{0} \in \Theta$
\item For $m = 0,\ldots, M-1$
    \begin{itemize}
        \item \emph{Simulate} trajectories $(X_{t_k}^j)_{k=0}^n$, $\ j=1,\ldots,B$.
        \item For $k=0,\ldots,n-1$, $j=1,\ldots$,
        \emph{compute} the following quantities:
\begin{align*}
-&\text{ signed distances} \quad &d_{k,m,j} &= 
\eta\ \left(g(t_k,\Xi(X^j_{t_k}); \theta^{m})-\alpha(X^j_{t_k})\right) \\
-&\text{ stopping probabilities}\quad &p_{k,m,j} &
= \left(\frac{\eps- d_{k,m,j}}{2\eps}\right)^+\wedge 1 
\quad \text{(} p_{n,m,j}=1 \text{)} \\
-&\text{ stopping budgets}\quad &b_{k+1,m,j} & = b_{k,m,j} (1- p_{k,m,j})
\quad \text{(}b_{0,m,j}=1 \text{)} \\
    -&\text{ reward function}\quad &R_\eps(\theta_m) &=  
    \frac{1}{B}\sum_{k=1}^B  \sum_{k=0}^n 
    p_{k,m,j}  b_{k,m,j} \varphi(t_k,X_{t_k}^j)
    \end{align*}
\item \emph{Update:} $\theta_{m+1} = \theta_{m} + \zeta_m \nabla_\theta R_\eps(\theta_{m})$
    \end{itemize}
\item \emph{Return} $\theta_{M}$ \qed
\end{enumerate}
\vspace{10pt}

In the above algorithm, we 
fix the width of the fuzzy region $\eps>0$,
and the learning rate 
process $\zeta_m$ is taken from the  Adam optimizer \cite{Kingma}.
Also in all financial examples, we use important 
sampling as discussed in the subsection \ref{ss.important}.
That is, chosen a parameter $\lambda$, we modify the
dynamics of the state process $X$ and use it in
the simulations.  Then, we adjust the reward function 
as in \eqref{e.lambda}.  The choice of the tuning
parameters $\eps$ and $\lambda$ are discussed in the specific
examples.

\bibliographystyle{abbrvnat}
\bibliography{refFB.bib}

\end{document}